\newtheorem{theorem}{Theorem}
\newtheorem{remark}{Remark}
\newtheorem{example}{Example}
\newtheorem{definition}{Definition}
\journal{NAHS}
\begin{document}

\begin{frontmatter}



\title{Integral Input-to-State Stability of Nonlinear Time-Delay Systems with Delay-Dependent Impulse Effects}

\author[label]{Kexue Zhang}\ead{kexue.zhang@ucalgary.ca}
\address[label]{Department of Mathematics and Statistics, University of Calgary, Calgary, Alberta T2N 1N4, Canada}

\begin{abstract}
This paper studies integral input-to-state stability (iISS) of nonlinear impulsive systems with time-delay in both the continuous dynamics and the impulses. Several iISS results are established by using the method of Lyapunov-Krasovskii functionals. For impulsive systems with iISS continuous dynamics and destabilizing impulses, we derive two iISS criteria that guarantee the uniform iISS of the whole system provided that the time period between two successive impulse moments is appropriately bounded from below. Then we provide an iISS result for systems with unstable continuous dynamics and stabilizing impulses. For this scenario, it is shown that the iISS properties are guaranteed if the impulses occur frequently enough. Last but not least, sufficient conditions are also obtained to guarantee the uniform iISS of the entire system over arbitrary impulse time sequences. As applications, iISS properties of a class of bilinear systems are studied in details with simulations to demonstrate the presented results.

\end{abstract}

\begin{keyword}
Impulsive systems\sep time-delay \sep integral input-to-state stability\sep delay-dependent impulses \sep Lyapunov-Krasovskii functional.


\end{keyword}

\end{frontmatter}


\section{Introduction}

Impulsive system is a hybrid dynamical system that exhibits both continuous dynamics (modeled by differential equations) and discrete dynamics (or impulses which are state jumps or resets at a sequence of discrete moments). Due to the ubiquitous of time-delay, stability of time-delay systems with impulses has been studied extensively in the literature (see, e.g., \cite{IS:2009,ZY-DX:2007,XL-QW:2007,WR-JX:2019}). It is natural to consider the time-delay effects in the impulses of a dynamical system when the state abrupt changes depend on the state at a history moment. For instance, it takes time to sample, process and transmit the impulse information in the impulsive controller which utilizes the impulses to control a dynamical system. Discrete and distributed delays are considered in the impulsive protocols in \cite{XL-KZ-WCX:2018} and \cite{XL-KZ-WCX:2019}, respectively, to guarantee the consensus of multi-agent systems. Such time-delay arises from the postponement of communications among the agents. Sampling and transmission delays are also inevitable in impulsive synchronization based secure communications (see \cite{AK-XL-XS:2009} for example). In the past decade, great progress has been made in the study of stability properties of nonlinear systems with delay-dependent impulse effects and related control problems (see, e.g., \cite{XL-KZ-WCX:2018,XL-KZ-WCX:2019,AK-XL-XS:2009,LG-DW-GW:2015,XL-KZ-WCX:2016,XL-XZ-SS:2017,XL-KZ:2018}).

The notion of input-to-state stability (ISS), introduced by Sontag in \cite{EDS:1989}, characterizes the impact of external inputs to the control system. The ISS notion roughly states that "the state must be bounded if the external inputs are uniformly bounded". Applications of ISS are now extensive, e.g., event-triggered control \cite{PT:2007}, distributed source seeking \cite{SA-WW-FZ:2018}, robustification of observers \cite{MC-ST-LZ:2018},  formation maneuver \cite{FM-FH-MB-MQ:2018}, predictive control \cite{DL-MMN-IVK:2018}. However, the boundedness of the states cannot be reflected through the ISS property when the inputs are unbounded but have finite total energy (e.g., impulse inputs). Another drawback of the ISS property is that it cannot provide an ideal bound if the external inputs have an extremely large bound but finite total energy. To take into account of such inputs, an integral variant of ISS, called integral input-to-state stability (iISS), was introduced in \cite{EDS:1998}. Applications of iISS can be found in small-gain theorem \cite{HI-CMK:2018}, tracking problems \cite{DA-EDS-YW:2000}, disturbance attenuation \cite{DL:1999}, and so on. The concepts of ISS and iISS were subsequently extended to impulsive systems in \cite{JPH-DL-ART:2008,WHC-WXZ:2009} and to hybrid impulsive and switching systems with time-delay in \cite{JL-XL-WCX:2011}. The ISS and iISS results were further improved in \cite{XMS-WW:2012} for switching time-delay systems with impulses. Up to now, numerous researchers have investigated the ISS properties of impulsive systems (see, e.g., \cite{XW-YT-WZ:2016,YW-RW-JZ:2013,SD-AM:2013}). However, no delay effects have been considered in these impulses. In \cite{XZ-XL:2016}, the effects of delay-dependent impulses were studied for the ISS properties of nonlinear delay-free systems for the first time. Recently, we investigated the ISS properties of nonlinear time-delay systems with delay-dependent impulses in \cite{XL-KZ:2019}. Nevertheless, the iISS properties of such impulsive systems have not been studied yet.

Motivated by the above discussion, we study the integral input-to-state stability of nonlinear time-delay systems with delay-dependent impulses. The stability analysis is conducted by using the method of Lyapunov-Krasovskii functionals. In general, the Lyapunov candidate is broken into a function part and a functional part in order to follow the spirit of Lyapunov and Krasovskii and characterize the impulse effects simultaneously. To be more specific, the instantaneous state jumps can be captured through the function part of the Lyapunov candidate, while the functional part is indifferent to impulses. {To our best knowledge, the time-delay effects have been only considered in either the continuous dynamics (see, \cite{WHC-WXZ:2009,JL-XL-WCX:2011,XMS-WW:2012,XW-YT-WZ:2016,YW-RW-JZ:2013}) or the impulses (see, e.g., \cite{XZ-XL:2016}) in the existing iISS results for impulsive systems, and none of the existing iISS results is applicable to impulsive systems with time-delay in both the continuous dynamics and the impulses (namely, time-delay systems with delay-dependent impulses). It is the first time that the iISS results are derived for nonlinear time-delay systems with delay-dependent impulse effects in this study. Additionally, compared with the results in \cite{JL-XL-WCX:2011,XMS-WW:2012}, one of our iISS criteria is less conservative when applied to time-delay systems with delay-free impulses (See Remark \ref{remark-th3} for details).} When the obtained results are applied to analyze the iISS properties of a type of bilinear systems, our sufficient iISS conditions generalize the results in \cite{PP-ZPJ:2006} for time-delay systems without impulses and the ones in \cite{JL-XL-WCX:2011} for time-delay systems with delay-free impulse effects, and can be used to systems with delay-dependent impulses.

The rest of this paper is organized as follows. Section \ref{Sec2} introduces some preliminaries. In section \ref{Sec3}, we present our main results for iISS of nonlinear impulsive systems with time-delay in both the continuous and discrete dynamics. The iISS properties of a type of bilinear systems are investigated in section \ref{Sec4} and two illustrative examples are then provided in section \ref{Sec5} to demonstrate the obtained iISS criteria. Section \ref{Sec6} concludes the paper.

\section{Preliminaries}\label{Sec2}

This section introduces our notation and states the iISS problem of impulsive systems with time-delay.

\subsection{Notation}
Let $\mathbb{N}$ denote the set of positive integers, $\mathbb{Z}^+$ the set of nonnegative integers, $\mathbb{R}$ the set of real numbers, $\mathbb{R}^+$ the set of nonnegative reals, and $\mathbb{R}^n$ the $n$-dimensional real space equipped with the Euclidean norm denoted by $\|\cdot\|$. Let $I$ denote the identity matrix. Given an $n\times n$ matrix $A$, we denote $\|A\|$ the spectral norm of $A$. For a positive definite matrix $P$, let $\lambda_{max}(P)$ and $\lambda_{min}(P)$ represent the largest and smallest eigenvalues of $P$, respectively.

The following function classes are essential to our study of the iISS property. A continuous function $\alpha:\mathbb{R}^+\rightarrow\mathbb{R}$ is said to be of class $\mathcal{K}$ and we write $\alpha\in\mathcal{K}$, if $\alpha$ is strictly increasing and $\alpha(0)=0$. If $\alpha$ is also unbounded, we say that $\alpha$ is of class $\mathcal{K}_{\infty}$ and we write $\alpha\in\mathcal{K}_{\infty}$. A continuous function $\beta:\mathbb{R}^+\times\mathbb{R}^+ \rightarrow\mathbb{R}^+$ is said to be of class $\mathcal{KL}$ and we write $\beta\in \mathcal{KL}$, if $\beta(\cdot,t)\in\mathcal{K}$ for each $t\in\mathbb{R}^+$ and $\beta(s,t)$ decreases to $0$ as $t\rightarrow \infty$ for each $s\in \mathbb{R}^+$.

For $a,b\in \mathbb{R}$ with $b>a$, denote $\mathcal{PC}([a,b],\mathbb{R}^n)$ the set of piecewise right continuous functions $\varphi:[a,b]\rightarrow\mathbb{R}^n$, and $\mathcal{PC}([a,\infty),\mathbb{R}^n)$ the set of functions $\phi:[a,\infty)\rightarrow\mathbb{R}^n$ satisfying $\phi|_{[a,b]}\in \mathcal{PC}([a,b],\mathbb{R}^n)$ for all $b>a$, where $\phi|_{[a,b]}$ is a restriction of $\phi$ on interval $[a,b]$. Given $r>0$, the linear space $\mathcal{PC}([-r,0],\mathbb{R}^n)$ is equipped with a norm defined by $\|\varphi\|_r:=\sup_{s\in[-r,0]}\|\varphi(s)\|$ for $\varphi\in \mathcal{PC}([-r,0],\mathbb{R}^n)$. For simplicity, we use $\mathcal{PC}$ to represent $\mathcal{PC}([-r,0],\mathbb{R}^n)$. Given $x\in \mathcal{PC}([-r,\infty),\mathbb{R}^n)$ and for each $t\in\mathbb{R}^+$, we define $x_t\in\mathcal{PC}$ as $x_t(s):=x(t+s)$ for $s\in [-r,0]$. 

\subsection{Problem Formulation}
Consider the following nonlinear time-delay impulsive system:
\begin{eqnarray}\label{sys}
\left\{\begin{array}{ll}
\dot{x}(t)=f(t,x_t,w(t)), & t\not=t_k,~k\in\mathbb{N}\cr
\Delta x(t)=I_k(t,x_{t^-},w(t^-)), & t=t_k,~k\in\mathbb{N}\cr
x_{t_0}=\varphi,
\end{array}\right.
\end{eqnarray}
where $x(t)\in\mathbb{R}^n$ is the system state; $w\in \mathcal{PC}([t_0,\infty),\mathcal{R}^m)$ is the input function; $\varphi\in\mathcal{PC}$ is the initial function; $f,I_k:\mathbb{R}^+\times\mathcal{PC}\times\mathbb{R}^m\rightarrow\mathbb{R}^n$ satisfy $f(t,0,0)=I_k(t,0,0)=0$ for all $k\in\mathbb{N}$; $\{t_1,t_2,t_3,...\}$ is a strictly increasing sequence such that $t_k\rightarrow\infty$ as $t\rightarrow\infty$; $\Delta x(t):=x(t^+)-x(t^-)$ where $x(t^+)=\lim_{s\rightarrow t^+} x(s)$ and $x(t^-)=\lim_{s\rightarrow t^-} x(s)$ (similarly, $w(t^-)=\lim_{s\rightarrow t^-} w(s)$); $x_{t^-}$ is defined as $x_{t^-}(s)=x(t+s)$ if $s\in[-r,0)$ and $x_{t^-}(0)=x(t^-)$; $r$ represents the maximum time-delay in system \eqref{sys}. Given $w\in \mathcal{PC}([t_0,\infty),\mathcal{R}^m)$, define $g(t,\phi)=f(t,\phi,w(t))$ and assume $g$ satisfies all the necessary conditions in \cite{GB-XL:1999} so that, for any initial condition $\varphi\in\mathcal{PC}$, system \eqref{sys} has a unique solution $x(t,t_0,\varphi)$ that exists in a maximal interval $[t_0-r,t_0+\Gamma)$, where $0<\Gamma\leq \infty$.

Now we state the iISS definition for system \eqref{sys} which was originally introduced for hybrid impulsive and switching systems with time-delay in \cite{JL-XL-WCX:2011}.

\begin{definition}
System \eqref{sys} is said to be uniformly integral input-to-state stable (iISS) over a certain class $\ell$ of admissible impulse time sequences, if there exist functions $\beta\in\mathcal{KL}$ and $\alpha,\gamma\in\mathcal{K}_{\infty}$, independent of the choice of the sequences in $\ell$, such that, for each initial condition $\varphi\in\mathcal{PC}$ and input function $w\in\mathcal{PC}([t_0,\infty),\mathbb{R}^m)$, the corresponding solution to \eqref{sys} exists globally and satisfies
$$\alpha(\|x(t)\|)\leq \beta(\|\varphi\|_r,t-t_0)+\int^t_{t_0} \gamma(\|w(s)\|)\mathrm{d}s + \sum_{t_0<t_k\leq t} \gamma(\|w(t^-_k)\|)$$
for all $t\geq t_0$.
\end{definition}

To study the iISS properties of system \eqref{sys}, we apply the method of Lyapunov-Krasovskii functionals and require that the Lyapunov functional candidate contains a pure function portion which can be used to characterize the impulse effects on the whole Lyapunov candidate. Next we introduce two function classes related to such function part and the Lyapunov functional candidate, respectively. 
\begin{definition}
A function $v:\mathbb{R}^+\times\mathbb{R}^n\rightarrow \mathbb{R}^+$ is said to be of class $\nu_0$ and we write $v\in\nu_0$, if, for each $x\in\mathcal{PC}(\mathbb{R}^+,\mathbb{R}^n)$, the composite function $t\mapsto v(t,x(t))$ is also in $\mathcal{PC}(\mathbb{R}^+,\mathbb{R}^n)$ and can be discontinuous at some $t'\in\mathbb{R}^+$ only when $t'$ is a discontinuity point of $x$. 
\end{definition}

\begin{definition}
A functional $v:\mathbb{R}^+\times \mathcal{PC}\rightarrow\mathbb{R}^n$ is said to be of class $\nu^*_0$ and we write $v\in \nu^*_0$, if, for each function $x\in\mathcal{PC}([-r,\infty),\mathbb{R}^n)$, the composite function $t\mapsto v(t,x_t)$ is continuous in $t$ for all $t\geq 0$. 
\end{definition}

To analyze the continuous dynamics of system \eqref{sys}, we introduce the upper right-hand derivative of the Lyapunov functional candidate $V(t,x_t)$ with respect to system \eqref{sys}:
$$\mathrm{D}^+V(t,\phi)=\limsup_{h\rightarrow 0^+}\frac{1}{h}[V(t+h,x_{t+h}(t,\phi))-V(t,\phi)],$$
where $x(t,\phi)$ is a solution to \eqref{sys} satisfying $x_t=\phi$.

\section{Sufficient Conditions for iISS}\label{Sec3}

In this section, we establish several iISS results for system \eqref{sys} over three types of impulse time sequences. We first introduce two results for iISS of system \eqref{sys} with stable continuous dynamics and destabilizing impulses on $\ell_{\textrm{inf}}(\delta)$, the class of impulse time sequences satisfying $\inf_{k\in\mathbb{N}}\{t_k-t_{k-1}\}\geq \delta$.


\begin{theorem}
\label{th1}
Assume that there exist $V_1\in \nu_0$, $V_2\in \nu^*_0$, functions $\alpha_1,\alpha_2,\alpha_3,\chi\in\mathcal{K}_{\infty}$ and constants $\mu>0$ and $\rho_1,\rho_2\geq 0$, such that, for all $t\in \mathbb{R}^+$, $x\in\mathbb{R}^n$, $y\in\mathbb{R}^m$ and $\phi\in \mathcal{PC}$, 
\begin{itemize}
\item[(i)] $\alpha_1(\|x\|)\leq V_1(t,x)\leq \alpha_2(\|x\|)$ and $0\leq V_2(t,\phi)\leq \alpha_3(\|\phi\|_{r})$;

\item[(ii)] $\mathrm{D}^+ V(t,\phi)\leq (\chi(\|w(t)\|)-\mu) V(t,\phi) + \chi(\|w(t)\|)$, where $V(t,\phi)=V_1(t,\phi(0))+V_2(t,\phi)$;

\item[(iii)] $V_1(t,\phi(0)+I_k(t,\phi,y))\leq \rho_1 V_1(t^-,\phi(0))+\rho_2 \sup_{s\in[-r,0]}\{V_1(t^-+s,\phi(s))\}+\chi(\|y\|)$;

\end{itemize}
Moreover, if one of the following conditions holds:
\begin{itemize}
\item[(a)] $\rho_1\geq 1$ and $\ln \rho <\mu\delta$ where $\rho:=\rho_1+\rho_2 e^{\mu r}>1$;

\item[(b)] $\rho_1 < 1$ and there exists a positive constant $\kappa$ such that $V_2(t,\phi)\leq \kappa \sup_{s\in[-r,0]}\{V_1(t+s,\phi(s))\}$ and $\ln \rho <\mu\delta$ where $\rho:=\rho_1+[\rho_2+(1-\rho_1)\kappa] e^{\mu r}$ with $\rho_1+\rho_2\geq 1$,
\end{itemize}
then system (\ref{sys}) is uniformly iISS over $\ell_{\textrm{inf}}(\delta)$.
\end{theorem}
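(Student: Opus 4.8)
The plan is to reduce the three hypotheses to a single scalar comparison on $V=V_1+V_2$, propagate it across the impulse instants using the dwell-time lower bound $\delta$, and finally read off the iISS estimate through (i).

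First I would collapse the impulse hypothesis (iii) into one delay-difference inequality for $V$. Because $V_2\in\nu^*_0$ is continuous at each $t_k$, we have $V(t_k^+)=V_1(t_k^+)+V_2(t_k^-)$, and (iii) controls $V_1(t_k^+)$. Writing $\bar{V}(t):=\sup_{s\in[-r,0]}V(t+s)$ and using $V_1\le V$, in case (a) the assumption $\rho_1\ge 1$ gives $V_2(t_k^-)\le\rho_1 V_2(t_k^-)$, whence
\[
V(t_k^+)\le \rho_1 V(t_k^-)+\rho_2\,\bar{V}(t_k^-)+\chi(\|w(t_k^-)\|).
\]
In case (b), where $\rho_1<1$, I would instead use the exact identity $\rho_1 V_1(t_k^-)+V_2(t_k^-)=\rho_1 V(t_k^-)+(1-\rho_1)V_2(t_k^-)$ together with $V_2(t_k^-)\le\kappa\,\bar{V}(t_k^-)$ to obtain the same inequality with $\rho_2$ replaced by $\rho_2+(1-\rho_1)\kappa$. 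In both cases the per-impulse coefficients are precisely those defining $\rho=\rho_1+\rho_2 e^{\mu r}$ and $\rho=\rho_1+[\rho_2+(1-\rho_1)\kappa]e^{\mu r}$, respectively.

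Next I would treat the continuous dynamics. For the unforced system $\chi(\|w\|)=\chi(0)=0$, so (ii) reduces to $\mathrm{D}^+V\le-\mu V$ and yields the clean exponential decay $V(t)\le e^{-\mu(t-t_{k-1})}V(t_{k-1}^+)$ on each inter-impulse interval. The difficulty is the multiplicative input term $\chi(\|w\|)V$ in (ii): a direct comparison on $V$ generates a factor $\exp(\int\chi(\|w\|))$, which is incompatible with the additive integral required by the iISS definition. To linearize the input I would pass to $W:=\ln(1+V)$, for which (ii) becomes
\[
\mathrm{D}^+W\le \chi(\|w(t)\|)-\mu\,\frac{V}{1+V}=\chi(\|w(t)\|)-\mu\bigl(1-e^{-W}\bigr),
\]
so the input now enters additively while the decay term $\mu(1-e^{-W})$ is positive definite (bounded, not of class $\mathcal{K}_{\infty}$) --- exactly an iISS-type dissipation inequality whose scalar comparison system is iISS.

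Finally I would run the dwell-time induction. Restricting the inter-impulse estimate to the window $[t_k-r,t_k)$ lets me dominate the delayed supremum $\bar{V}(t_k^-)$ by the decayed value at $t_k^-$ multiplied by $e^{\mu r}$ (plus an input remainder); this is the origin of the factor $e^{\mu r}$ in $\rho$. Feeding this into the collapsed impulse inequality, the free part of $V$ satisfies $V(t_k^+)\le \rho\,e^{-\mu(t_k-t_{k-1})}V(t_{k-1}^+)+(\text{input terms})$, and since $t_k-t_{k-1}\ge\delta$ the hypothesis $\ln\rho<\mu\delta$ forces the contraction factor $\rho e^{-\mu(t_k-t_{k-1})}\le\rho e^{-\mu\delta}<1$. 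Iterating this recursion collapses the initial data into a single $\beta\in\mathcal{KL}$ of $(\|\varphi\|_r,t-t_0)$, while the continuous inputs accumulate as $\int_{t_0}^t\gamma(\|w(s)\|)\,\mathrm{d}s$ and the jump inputs as $\sum_{t_0<t_k\le t}\gamma(\|w(t_k^-)\|)$; bounding $\alpha_1(\|x(t)\|)\le V_1\le V$ through (i) and setting $\alpha=\alpha_1$ then yields uniform iISS over $\ell_{\textrm{inf}}(\delta)$. The main obstacle is making the \emph{non-exponential}, additive-input iISS decay of the continuous part coexist with the \emph{geometric} dwell-time contraction produced by the impulses without reintroducing multiplicative input factors; a secondary technical point is the regime $r>\delta$, in which the delayed window $[t_k-r,t_k)$ overlaps several impulse instants and the $e^{\mu r}$ estimate must be iterated across them.
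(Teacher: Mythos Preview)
Your collapsing of (iii) into a single delay-difference inequality for $V$, in both cases (a) and (b), is exactly what the paper does, and your identification of the origin of the $e^{\mu r}$ factor and of the $r>\delta$ overlap issue is on target. The gap is in your handling of the multiplicative input term in (ii).

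Passing to $W=\ln(1+V)$ \emph{before} the induction is the wrong move. While it makes the input additive in the continuous flow, it destroys the linear structure of the jump (in $W$-coordinates the jump reads $e^{W(t_k^+)}-1\le\rho_1(e^{W(t_k^-)}-1)+\cdots$) and replaces the exponential decay $e^{-\mu(t-t_k)}$ of $V$ by the merely bounded positive-definite rate $\mu(1-e^{-W})$ for $W$, which cannot drive a geometric dwell-time recursion. You essentially concede this when you name it ``the main obstacle'', and your induction paragraph then silently reverts to $V$ and its exponential decay --- thereby reintroducing the very multiplicative factor $\exp\bigl(\int\chi(\|w\|)\bigr)$ you set out to avoid. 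The two pictures are not being combined; they are being conflated.

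The paper resolves this the other way around: it \emph{keeps} the multiplicative factor $\mathcal{E}(t,t_0)=\exp\bigl(\int_{t_0}^{t}\chi(\|w(s)\|)\,\mathrm{d}s\bigr)$ throughout and proves by induction the single estimate
\[
v(t)e^{\lambda(t-t_0)}\le \mathcal{E}(t,t_0)\Bigl(\alpha(\|\varphi\|_r)+\rho\, e^{\lambda(t-t_0)}\!\int_{t_0}^{t}\chi(\|w(s)\|)\,\mathrm{d}s+\sum_{t_0<t_k\le t}e^{\lambda(t_k-t_0)}\chi(\|w(t_k^-)\|)\Bigr)
\]
for a small $\lambda\in(0,\mu)$ chosen so that $\rho\,e^{-(\mu-\lambda)\delta}\le 1$. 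The geometric contraction lives entirely inside this $V$-level inequality; the $r>\delta$ case is handled by an auxiliary claim bounding $v(t_m^-+s)e^{\lambda(t_m+s-t_0)}$ uniformly in $s\in[-r,0]$ by $e^{(\mu-\lambda)r}u(t_m^-)$, obtained by locating $t_m+s$ in the appropriate interval $[t_j,t_{j+1})$ and invoking the inductive hypothesis there. Only \emph{after} the induction is complete does one convert to the additive iISS form, essentially via $\ln(1+\cdot)$ on both sides: since $\alpha_1(\|x(t)\|)\le v(t)\le \mathcal{E}\cdot(\text{rest})$ and $\ln(1+ab)\le \ln a+\ln(1+b)$ for $a\ge 1$, $b\ge 0$, one gets $\ln(1+\alpha_1(\|x(t)\|))\le\int_{t_0}^{t}\chi(\|w\|)+\ln(1+\text{rest})$, which is the required additive estimate with $\alpha=\ln(1+\alpha_1(\cdot))\in\mathcal{K}_\infty$. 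In short: run the exponential/geometric argument on $V$ with $\mathcal{E}$ carried along, and take the logarithm at the very end --- not at the beginning.
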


\begin{proof} Let's first prove this result under condition (a). There exists a small enough constant $\lambda>0$ so that $\mu>\lambda$ and $\rho e^{-(\mu-\lambda)\delta}\leq 1$. Let $x$ be a solution of \eqref{sys}, and set $v_1(t):=V_1(t,x(t))$, $v_2(t):=V_2(t,x_t)$, and $v(t):=v_1(t)+v_2(t)$. By mathematical induction, we shall prove that
\begin{align}\label{eqn-u}
v(t)e^{\lambda(t-t_0)} \leq  \mathcal{E}(t,t_0)\Big( \alpha(\|\varphi\|_r) +\rho e^{\lambda(t-t_0)} \int^t_{t_0}{\chi}(\|w(s)\|)\mathrm{d}s +\sum_{t_0<t_k\leq t} e^{\lambda(t_k-t_0)} \chi(\|w(t^-_k)\|) \Big),
\end{align}
where $\alpha(\|\varphi\|_r):=\alpha_2(\|\varphi(0)\|)+\alpha_3(\|\varphi\|_r)$ and $\mathcal{E}(t,s)=\mathrm{exp}(\int^t_{s}{\chi}(\|w(\tau)\|)\mathrm{d}\tau)$. For convenience, denote the right-hand side of \eqref{eqn-u} as $u(t)$.

By multiplying both sides of the inequality in condition (ii) with $\mathrm{exp}(\mu t-\int^t_{t_k}{\chi}(\|w(s)\|)\mathrm{d}s)$ and then integrating from $t_k$ to $t$, we obtain
\begin{equation}\label{eqn-v}
v(t)\leq \mathcal{E}(t,t_k) \Big( e^{-\mu(t-t_k)} v(t_k) +\int^t_{t_k}{\chi}(\|w(s)\|)\mathrm{d}s\Big),
\end{equation}
for $t\in [t_k,t_{k+1})$ and $k\in \mathbb{Z}^+$. In \eqref{eqn-v}, we used the fact that $\mathrm{exp}(-\mu(t-s)-\int^s_{t_k}{\chi}(\|w(\tau)\|)\mathrm{d}\tau)\leq 1$ for all $s\in[t_k,t]$. For $t\in [t_0,t_1)$, multiplying both sides of \eqref{eqn-v} with $e^{\lambda(t-t_0)}$ and using condition (i) and the fact that $\mu>\lambda$ and $\rho\geq 1$, we conclude that \eqref{eqn-u} holds for $t\in [t_0,t_1)$.

Now suppose that \eqref{eqn-u} holds for $t\in [t_0,t_m)$ where $m\geq 1$. We shall prove \eqref{eqn-u} is true on $[t_m,t_{m+1})$. To do this, we firstly conduct the following estimation:
\begin{align}\label{eqn-tk-}
\rho v(t^-_m) e^{\lambda(t_m-t_0)}
\leq  & ~\mathcal{E}(t_m,t_0)\Big(\rho e^{-(\mu-\lambda)(t_m-t_{m-1})} \alpha(\|\varphi\|_r)\cr
      &+ \rho^2 e^{-\mu(t_m-t_{m-1})} e^{\lambda(t_m-t_0)} \int^{t_{m-1}}_{t_0}{\chi}(\|w(s)\|)\mathrm{d}s\cr
      &+ \rho e^{-(\mu-\lambda)(t_m-t_{m-1})}  \sum_{t_0<t_k\leq t_{m-1}} e^{\lambda(t_k-t_0)}\chi(\|w(t^-_k)\|)\Big)\cr
      & + \mathcal{E}(t_m,t_{m-1}) \rho e^{\lambda(t_m-t_0)} \int^{t_{m}}_{t_{m-1}}{\chi}(\|w(s)\|)\mathrm{d}s\cr
\leq  & ~\mathcal{E}(t_m,t_0)\Big(\rho e^{-(\mu-\lambda)\delta} \alpha(\|\varphi\|_r)\cr
      &+ \rho e^{\lambda(t_m-t_0)} \int^{t_{m}}_{t_0}{\chi}(\|w(s)\|)\mathrm{d}s\cr
      &+ \rho e^{-(\mu-\lambda)\delta}  \sum_{t_0<t_k\leq t_{m-1}} e^{\lambda(t_k-t_0)}\chi(\|w(t^-_k)\|)\Big)\cr
\leq  & ~u(t^-_m).
\end{align}
We used \eqref{eqn-v} with $t=t^-_m$ and then \eqref{eqn-u} with $t=t_{m-1}$ in the first inequality of \eqref{eqn-tk-}. For the second inequality of \eqref{eqn-tk-}, we used the facts that $t_m-t_{m-1}\geq \delta$, $\rho e^{-\mu\delta}< 1$ and $\mathcal{E}(t_m,t_{m-1})\leq \mathcal{E}(t_m,t_{0})$. Next, we will show that \eqref{eqn-u} is true for $t=t_m$. We start with making the claim that
\begin{equation}\label{eqn-claim}
\rho v(t^-_m+s)e^{\lambda(t_m+s-t_0)}\leq e^{(\mu-\lambda)r} u(t^-_m), \textrm{~for~all~} s\in[-r,0].
\end{equation}
Without loss of generality, we assume $t_m+s\geq t_0$ for all $s\in[-r,0]$, then, for a fixed $s\in[-r,0]$, there exists an integer $j$ ($0\leq j\leq m-1$) such that $t_m+s\in [t_j,t_{j+1})$. By using \eqref{eqn-v} with $t=t^-_m+s$ and then \eqref{eqn-u} with $t=t_j$, we obtain:
\begin{align}
& \rho v(t^-_m+s) e^{\lambda(t_m+s-t_0)}\cr
\leq &  {~\rho\mathcal{E}(t_m+s,t_{j}) \Big\{ e^{(\lambda-\mu)(t_m+s-t_0)} \Big[ \mathcal{E}(t_j,t_{0}) \Big( \alpha(\|\varphi\|_r)+\rho e^{\lambda(t_{j}-t_0)} \int^{t_{j}}_{t_0}{\chi}(\|w(s)\|)\mathrm{d}s }\cr
     &  {~~~~ +\sum_{t_0<t_k\leq t_{j}} e^{\lambda(t_k-t_0)}\chi(\|w(t^-_k)\|)\Big)\Big] + e^{\lambda(t_m+s-t_0)}\int^{t_{m}+s}_{t_j}{\chi}(\|w(s)\|)\mathrm{d}s \Big\}}\cr
\leq & ~\mathcal{E}(t_m+s,t_{j})\Big\{ e^{(\mu-\lambda)r}\mathcal{E}(t_j,t_{0})\Big[ \rho e^{-(\mu-\lambda)\delta} \alpha(\|\varphi\|_r)+\rho^2 e^{-\mu\delta} e^{\lambda(t_{j+1}-t_0)} \int^{t_{j}}_{t_0}{\chi}(\|w(s)\|)\mathrm{d}s \cr
     & ~~~~ + \rho e^{-(\mu-\lambda)\delta}  \sum_{t_0<t_k\leq t_{j}} e^{\lambda(t_k-t_0)}\chi(\|w(t^-_k)\|)\Big] + \rho e^{\lambda(t_{j+1}-t_0)} \int^{t_{j+1}}_{t_j}{\chi}(\|w(s)\|)\mathrm{d}s \Big\}\cr
\leq & e^{(\mu-\lambda)r} u(t^-_{j+1})\cr
\leq & e^{(\mu-\lambda)r} u(t^-_m),
\end{align}
which implies \eqref{eqn-claim} is true for all $s\in[-r,0]$.  {In the second inequality, we used the facts $t_{j+1}-t_m-s\leq r$ and $e^{(\lambda-\mu)(t_m+s-t_0)}=e^{(\mu-\lambda)(t_{j+1}-t_m-s)}e^{(\lambda-\mu)(t_{j+1}-t_j)}$. The last inequality is from the fact that $u$ is a monotone increasing function.} Combining \eqref{eqn-tk-} and \eqref{eqn-claim} with condition (iii) and the fact that $\rho_1\geq 1$, we conclude that
\begin{align}\label{eqn-tm}
 v(t_m) e^{\lambda(t_m-t_0)}
\leq & ~[\rho_1 v_1(t^-_m) + \rho_2 \sup_{-r\leq s\leq 0}\{v_1(t^-_m+s)\} + \chi(\|w(t^-_m)\|) + v_2(t^-_m)] e^{\lambda(t_m-t_0)}\cr
\leq & ~[\rho_1 v(t^-_m) + \rho_2 \sup_{-r\leq s\leq 0}\{v(t^-_m+s)\} + \chi(\|w(t^-_m)\|)] e^{\lambda(t_m-t_0)}\cr
\leq & ~\rho_1 v(t^-_m) e^{\lambda(t_m-t_0)} + \rho_2 e^{\lambda r} \sup_{-r\leq s\leq 0}\{v(t^-_m+s) e^{\lambda(t_m+s-t_0)}\} + \chi(\|w(t^-_m)\|) e^{\lambda(t_m-t_0)}\cr
\leq & ~\frac{\rho_1+\rho_2 e^{\mu r}}{\rho} u(t^-_m) + \chi(\|w(t^-_m)\|) e^{\lambda(t_m-t_0)}\cr
  =  & ~u(t_m),
\end{align}
which implies \eqref{eqn-u} holds for $t=t_m$. {In the first inequality, we used the fact that $v_2$ is continuous at $t=t_m$.} We now prove that \eqref{eqn-u} is true on $(t_m,t_{m+1})$. For  $t\in(t_m,t_{m+1})$, we have
\begin{align}\label{eqn-tm_tm+1}
 v(t) e^{\lambda(t-t_0)}
\leq & ~\mathcal{E}(t,t_{0}) \bigg( e^{-(\mu-\lambda)(t-t_m)}\alpha(\|\varphi\|_r) +\rho e^{-\mu(t-t_m)}e^{\lambda(t-t_0)}\int^{t_m}_{t_0}\chi(\|w(s)\|)\mathrm{d}s\cr
     & ~~+e^{-(\mu-\lambda)(t-t_m)} \sum_{t_0<t_k\leq t_m}e^{\lambda(t_k-t_0)}\chi(\|w(t^-_k)\|) \bigg)\cr
     & ~~+\mathcal{E}(t,t_{m})e^{\lambda(t-t_0)} \int^{t}_{t_m}\chi(\|w(s)\|)\mathrm{d}s\cr
\leq & ~\mathcal{E}(t,t_{0}) \bigg( \alpha(\|\varphi\|_r) +\rho e^{\lambda(t-t_0)}\int^{t_m}_{t_0}\chi(\|w(s)\|)\mathrm{d}s+ \sum_{t_0<t_k\leq t_m}e^{\lambda(t_k-t_0)}\chi(\|w(t^-_k)\|) \bigg)\cr
     & ~~+\mathcal{E}(t,t_{0})e^{\lambda(t-t_0)} \int^{t}_{t_m}\chi(\|w(s)\|)\mathrm{d}s\cr
\leq & u(t).
\end{align}
Here, we used \eqref{eqn-v} and then \eqref{eqn-u} with $t=t_m$ for the first inequality in \eqref{eqn-tm_tm+1}. For the second inequality, we used the facts that $\mu>\lambda$ and $\mathcal{E}(t,t_{0})\geq \mathcal{E}(t,t_{m})$. The last inequality in \eqref{eqn-tm_tm+1} holds because $\rho\geq 1$.

By induction, we conclude that \eqref{eqn-u} is true for all $t\geq t_0$. The iISS estimation can be conducted from \eqref{eqn-u} by standard arguments. The details are essentially the same as that in Theorem 3.3 of \cite{JL-XL-WCX:2011} and thus omitted. Boundedness of the solution to \eqref{sys} follows from this estimate, which then implies the solution's global existence (see the continuation theorem in \cite{GB-XL:1999}).

With condition (b), the proof is identical to the above discussion. The main difference is to replaced the following estimate of $v(t_m)$ in \eqref{eqn-tm}:
\begin{align}
 v(t_m) 
 \leq & \rho_1 v_1(t^-_m) + \rho_2 \sup_{-r\leq s\leq 0}\{v_1(t^-_m+s)\} + \chi(\|w(t^-_m)\|) + (\rho_1 +1-\rho_1)v_2(t^-_m)\cr
\leq & \rho_1 v(t^-_m) + [\rho_2+(1-\rho_1)\kappa] \sup_{-r\leq s\leq 0}\{v(t^-_m+s)\} + \chi(\|w(t^-_m)\|),\nonumber
\end{align}
where we used the condition $v_2(t)\leq \kappa \sup_{s\in[-r,0]}\{v_1(t+s)\}$. The rest of the proof is omitted.
\end{proof}

\begin{remark}\label{remark-th1} 
Compared with the existing results in \cite{JL-XL-WCX:2011,XMS-WW:2012,XW-YT-WZ:2016,YW-RW-JZ:2013,SD-AM:2013}, the main contribution of Theorem \ref{th1} is that it can be used to analyze the iISS properties of nonlinear systems with delay-dependent impulses. Condition (iii) describes the impulse effects on the function portion of the Lyapunov functional candidate. Parameters $\rho_1$ and $\rho_2$ quantify these effects related to the non-delayed and delayed states, respectively. If $\rho_2=0$, then Theorem \ref{th1} reduces to a special case of Theorem 3.3 in \cite{JL-XL-WCX:2011} for switching-free system \eqref{sys}. In condition (a), $\rho_1\geq 1$ means the non-delayed states at each impulse moment play the key role in the destabilizing impulse effects. On the other hand, $\rho_1<1$ in condition (b) implies that not the non-delayed states but the delayed states (or together with the non-delayed ones) lead to the destabilizing impulse effects. {Intuitively, conditions (a) and (b) say that increasing $\rho_1$ or $\rho_2$ corresponds to enlarging the destabilizing influence of the impulses which leads to a bigger $\delta$, and then the impulses cannot occur too frequently so that the entire system is iISS.}
\end{remark}

When the impulses are stabilizing but the continuous dynamics is unstable, we introduce an iISS criterion for system \eqref{sys} over $\ell_{\textrm{sup}}(\delta)$, the class of impulse time sequences satisfying $\sup_{k\in\mathbb{N}}\{t_k-t_{k-1}\}\leq \delta$.

\begin{theorem}
\label{th3}
Assume that there exist $V_1\in \nu_0$, $V_2\in \nu^*_0$, functions $\alpha_1,\alpha_2,\alpha_3,\chi\in\mathcal{K}_{\infty}$ and constants $\mu>0$, $\kappa>0$, $1>\rho_1\geq 0$ and $\rho_2\geq 0$, such that, for all $t\in \mathbb{R}^+$, $x\in\mathbb{R}^n$, $y\in\mathbb{R}^m$, and $\phi\in \mathcal{PC}$, 
\begin{itemize}
\item[(i)] $\alpha_1(\|x\|)\leq V_1(t,x)\leq \alpha_2(\|x\|)$ and $0\leq V_2(t,\phi)\leq \alpha_3(\|\phi\|_{r})$;

\item[(ii)] $\mathrm{D}^+ V(t,\phi)\leq (\chi(\|w(t)\|)+\mu) V(t,\phi) + \chi(\|w(t)\|)$, where $V(t,\phi)=V_1(t,\phi(0))+V_2(t,\phi)$;

\item[(iii)] $V_1(t,\phi(0)+I_k(t,\phi,y))\leq \rho_1 V_1(t^-,\phi(0))+\rho_2 \sup_{s\in[-r,0]}\{V_1(t^-+s,\phi(s))\}+ \chi(\|y\|)$;

\item[(iv)] $V_2(t,\phi)\leq \kappa \sup_{s\in[-r,0]}\{V_1(t+s,\phi(s))\}$;

\item[(v)] $\ln[\rho_1+\rho_2+(1-\rho_1)\kappa]  < -\mu \delta$,
\end{itemize}
then system (\ref{sys}) is uniformly iISS over $\ell_{\textrm{sup}}(\delta)$.
\end{theorem}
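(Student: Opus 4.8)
The plan is to mirror the inductive argument used for Theorem \ref{th1}, but with two structural changes dictated by the reversed roles of the continuous dynamics and the impulses: the continuous part now produces growth rather than decay, and the admissible sequences satisfy $t_k-t_{k-1}\le\delta$ instead of $\ge\delta$. First I would integrate condition (ii): multiplying by $\exp(-\mu t-\int_{t_k}^t\chi(\|w(s)\|)\mathrm{d}s)$ and integrating over $[t_k,t]$ yields the growth analogue of \eqref{eqn-v}, namely $v(t)\le \mathcal{E}(t,t_k)\,e^{\mu(t-t_k)}\big(v(t_k)+\int_{t_k}^t\chi(\|w(s)\|)\mathrm{d}s\big)$ for $t\in[t_k,t_{k+1})$, where $v=v_1+v_2$ and $\mathcal{E}$ is as in the proof of Theorem \ref{th1}. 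Next, setting $\sigma:=\rho_1+\rho_2+(1-\rho_1)\kappa$, condition (v) gives $\sigma e^{\mu\delta}<1$, so by continuity I can fix a small $\lambda>0$ with $\big(\rho_1+[\rho_2+(1-\rho_1)\kappa]e^{\lambda r}\big)e^{(\mu+\lambda)\delta}\le 1$; this single inequality packages both the delay factor $e^{\lambda r}$ and the per-interval growth factor $e^{(\mu+\lambda)\delta}$, and collapses to (v) as $\lambda\to0^+$.

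I would then run the induction on the scaled quantity $v(t)e^{\lambda(t-t_0)}$, aiming to dominate it by a monotone envelope $u(t)$ of the same shape as the right-hand side of \eqref{eqn-u}. The impulse step is handled exactly as in case (b) of Theorem \ref{th1}: using the continuity of $V_2$ to write $v_2(t_m)=v_2(t_m^-)=\rho_1 v_2(t_m^-)+(1-\rho_1)v_2(t_m^-)$, combining condition (iii) with condition (iv), and using $v_1\le v$, I obtain $v(t_m)\le\rho_1 v(t_m^-)+[\rho_2+(1-\rho_1)\kappa]\sup_{s\in[-r,0]}v(t_m^-+s)+\chi(\|w(t_m^-)\|)$, so the impulse contracts $v$ by the factor $\sigma$ up to the input term. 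The delayed supremum is controlled by the claim $\sup_{s\in[-r,0]} v(t_m^-+s)e^{\lambda(t_m+s-t_0)}\le u(t_m^-)$, which---crucially---needs no $e^{\mu r}$ correction: since the continuous dynamics grows, past values along the envelope are smaller than present ones, and the bound follows directly from the inductive hypothesis together with the monotonicity of $u$. This is exactly why condition (v) carries no $e^{\mu r}$ factor, in contrast to the quantity $\rho$ appearing in case (b) of Theorem \ref{th1}.

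The step I expect to be the main obstacle is the interplay between the within-interval growth and the monotone envelope. Because the continuous dynamics is unstable, $v(t)e^{\lambda(t-t_0)}$ increases across each interval by as much as $e^{(\mu+\lambda)\delta}$ and is only pulled back down at the impulses; unlike the stable case of Theorem \ref{th1}, it is therefore not dominated by a slowly increasing $u$ on the open intervals. I would resolve this by tracking the pre-impulse peaks $P_k:=v(t_k^-)e^{\lambda(t_k-t_0)}$, which---since both $v$ and $e^{\lambda(t-t_0)}$ increase on each interval---bound $v(t)e^{\lambda(t-t_0)}$ on $(t_{k-1},t_k)$ up to the input contribution. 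Chaining the growth estimate over $(t_{k-1},t_k)$ with the impulse contraction yields a recursion $P_k\le\Theta\,P_{k-1}+(\text{input terms})$ whose multiplier $\Theta=\big(\rho_1+[\rho_2+(1-\rho_1)\kappa]e^{\lambda r}\big)e^{(\mu+\lambda)\delta}\le1$ is precisely the quantity bounded above by the choice of $\lambda$; this keeps the peaks bounded and, after reabsorbing the bounded within-interval growth factor into the constants, shows that $v(t)e^{\lambda(t-t_0)}$ is dominated by a monotone $u(t)$ for all $t\ge t_0$. From this Lyapunov estimate the iISS bound follows by the same standard arguments as in Theorem 3.3 of \cite{JL-XL-WCX:2011}, and global existence of the solution follows from the resulting boundedness via the continuation theorem in \cite{GB-XL:1999}.
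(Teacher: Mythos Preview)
Your proposal is correct and follows the same route as the paper: the same choice of $\lambda$ from condition~(v), the same growth estimate obtained by integrating condition~(ii), the same impulse contraction via conditions~(iii)--(iv) and continuity of $V_2$, and the same observation that the delayed supremum is controlled directly by monotonicity of the envelope without any $e^{\mu r}$ correction. The only difference is bookkeeping at the within-interval step: the paper does not encounter the ``obstacle'' you describe, because it builds the growth factors $M=e^{(\mu+\lambda)\delta}$ and $c=e^{\mu\delta}$ into the envelope $u$ from the start (so that $u(t)=\mathcal{E}(t,t_0)\big[M\alpha(\|\varphi\|_r)+ce^{\lambda(t-t_0)}\!\int_{t_0}^t\chi(\|w(s)\|)\,\mathrm{d}s+M\sum_{t_0<t_k\le t}e^{\lambda(t_k-t_0)}\chi(\|w(t_k^-)\|)\big]$), after which the induction $v(t)e^{\lambda(t-t_0)}\le u(t)$ passes through the open intervals directly using $e^{(\mu+\lambda)(t-t_m)}\le M$ and $\rho M\le 1$---this is exactly the ``reabsorbing the bounded within-interval growth factor into the constants'' that you arrive at after your peak-recursion detour. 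One small imprecision: you write that ``$v$ increases on each interval,'' but only the \emph{bound} on $v$ from the growth estimate is increasing; this does not affect the argument since it is the bound, not $v$ itself, that you need monotone.
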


\begin{proof}
We conclude from condition (v) that there exists a positive constant $\lambda$ close to zero so that $\ln(\rho_1+[\rho_2+(1-\rho_1)\kappa] e^{\lambda r}) \leq -(\mu+\lambda)\delta$. Denote $\rho:=\rho_1+[\rho_2+(1-\rho_1)\kappa] e^{\lambda r}$, then we have $\rho e^{(\mu+\lambda)\delta}\leq 1$. Let $M=e^{(\mu+\lambda)\delta}$ and $c=e^{\mu\delta}$, then we shall show that 
\begin{align}\label{eqn-u2}
v(t)e^{\lambda(t-t_0)} \leq  \mathcal{E}(t,t_0)\bigg( M\alpha(\|\varphi\|_r) +c e^{\lambda(t-t_0)} \int^t_{t_0}{\chi}(\|w(s)\|)\mathrm{d}s +M \sum_{t_0<t_k\leq t} e^{\lambda(t_k-t_0)} \chi(\|w(t^-_k)\|) \bigg),
\end{align}
where $\alpha$ and $\mathcal{E}$ are the same as those in the proof of Theorem \ref{th1}. Let $u(t)$ represent the right-hand side of \eqref{eqn-u2}. Similar to the estimate of \eqref{eqn-v}, we can conclude from condition (ii) that 
\begin{equation}\label{eqn-v2}
v(t)\leq \mathcal{E}(t,t_k) \Big( e^{\mu(t-t_k)} v(t_k) +c \int^t_{t_k}{\chi}(\|w(s)\|)\mathrm{d}s\Big),
\end{equation}
for $t\in [t_k,t_{k+1})$ and $k\in \mathbb{Z}^+$. Then, using \eqref{eqn-v2} on $[t_0,t_1)$, we have
\begin{align}\label{eqn-k=1}
 v(t)e^{\lambda(t-t_0)}
&\leq  \mathcal{E}(t,t_0) \Big( e^{\lambda(t-t_0)}e^{\mu(t-t_0)} v(t_0) +c e^{\lambda(t-t_0)}\int^t_{t_0}{\chi}(\|w(s)\|)\mathrm{d}s\Big)\cr
&\leq  \mathcal{E}(t,t_0) \Big( e^{(\lambda+\mu)\delta} v(t_0) +c e^{\lambda(t-t_0)}\int^t_{t_0}{\chi}(\|w(s)\|)\mathrm{d}s\Big)\cr
& = u(t),
\end{align}
which means \eqref{eqn-u2} holds on $[t_0,t_1)$. Now suppose \eqref{eqn-u2} is true for $t\in [t_0,t_m)$ with $m\geq 1$. We shall prove that \eqref{eqn-u2} holds on $[t_m,t_{m+1})$. To do this, we start with proving that \eqref{eqn-u2} is true for $t=t_m$:
\begin{align}\label{eqn-tm-2}
 v(t_m)e^{\lambda(t_m-t_0)} 
\leq & \Big( \rho_1 v_1(t^-_m) + \rho_2\sup_{s\in[-r,0]}\{v_1(t^-_m+s)\} + \chi(\|w(t^-_m)\|) + {v_2(t^-_m)}\Big)e^{\lambda(t_m-t_0)} \cr
\leq & \Big(\rho_1 v(t^-_m) + [\rho_2+(1-\rho_1)\kappa]\sup_{s\in[-r,0]}\{v_1(t^-_m+s)\} + \chi(\|w(t^-_m)\|)  \Big)e^{\lambda(t_m-t_0)} \cr
\leq & \rho_1 u(t^-_m) + [\rho_2+(1-\rho_1)\kappa] e^{\lambda r}\sup_{s\in[-r,0]}\{v(t^-_m+s) e^{\lambda(t_m+s-t_0)}\} + e^{\lambda(t_m-t_0)}\chi(\|w(t^-_m)\|)  \cr
\leq & \rho u(t^-_m) + e^{\lambda(t_m-t_0)}\chi(\|w(t^-_m)\|) \cr
\leq & u(t_m).
\end{align}
Here, we used conditions (iii) and (iv) in the first inequality of \eqref{eqn-tm-2}, and then \eqref{eqn-u2} with the fact that $u(t^-_m+s)\leq u(t^-_m)$ for all $s\in[-r,0]$ in the estimate of the fourth inequality. For $t\in(t_m,t_{m+1})$, we conclude from \eqref{eqn-v2} and the fourth inequality of \eqref{eqn-tm-2} that
\begin{align}\label{eqn-vt}
 v(t)e^{\lambda(t-t_0)}
 \leq  &\mathcal{E}(t,t_m) \Big(\rho e^{(\mu+\lambda)(t-t_m)} u(t^-_m) + e^{(\mu+\lambda)(t-t_m)}e^{\lambda(t_m-t_0)}\chi(\|w(t^-_m)\|) \cr
 & +c e^{\lambda(t-t_0)}\int^t_{t_m}\chi(\|w(s)\|)\mathrm{d}s  \Big)\cr
 \leq & \mathcal{E}(t,t_0) \Big( M \alpha(\|\varphi\|_r) +  c e^{\lambda(t-t_0)} \int^{t_m}_{t_0}\chi(\|w(s)\|)\mathrm{d}s + M \sum_{t_0<t<t_m}e^{\lambda(t_k-t_0)}\chi(\|w(t^-_k)\|) \Big)\cr
      & +\mathcal{E}(t,t_m) \Big( M e^{\lambda(t_m-t_0)} \chi(\|w(t^-_m)\|) + c e^{\lambda(t-t_0)} \int^{t}_{t_m}\chi(\|w(s)\|)\mathrm{d}s \Big) \cr
 \leq & u(t)
\end{align}
i.e., \eqref{eqn-u2} hold on $(t_m,t_{m+1})$. We used the facts that $\rho e^{(\mu+\lambda)\delta}\leq 1$ and $\mathcal{E}(t,t_m)\leq \mathcal{E}(t,t_0)$ in the second inequality of \eqref{eqn-vt}. By the method of induction, we conclude that \eqref{eqn-u2} is true for all $t\geq t_0$. The iISS estimate from \eqref{eqn-u2} is similar to that in the proof of Theorem \ref{th1}, and global existence of the solution to \eqref{sys} follows from this estimate.
\end{proof}

\begin{remark}\label{remark-th3}
Compared with the iISS results in \cite{JL-XL-WCX:2011,XMS-WW:2012}, Theorem \ref{th3} is applicable to systems with delay-dependent impulses. Furthermore, when $\rho_2=0$ in condition (iii), Theorem \ref{th3} provides a less conservative iISS result for system \eqref{sys} {with delay-free impulses}. Because the upper bound of $\delta$, $\frac{-\ln(\rho_1+(1-\rho_1)\kappa)}{\mu}$, is bigger than $\frac{-\ln(\rho_1+\kappa)}{\mu}$ required in both Theorem 3.4 in \cite{JL-XL-WCX:2011} and Theorem 2 in \cite{XMS-WW:2012}, that is, our result can be applied to iISS analysis of system \eqref{sys} over a wider class of impulse sequences.
\end{remark}

{\begin{remark}\label{remark-th3.1}
It can be derived from condition (v) that $\rho_1+\rho_2+(1-\rho_1)\kappa<1$ and then $\kappa<\frac{1-\rho_1-\rho_2}{1-\rho_1}\leq 1$. Hence, condition (v) implies that decreasing $\rho_1$ or $\rho_2$ (enhancing the stabilizing impulse impact) causes a larger upper bound of $\delta$. This means the impulses should occur frequently enough to overcome the destabilizing effects of the continuous dynamics of system \eqref{sys} in order to guarantee the overall system's iISS property.
\end{remark}}

{The last result provides sufficient conditions to guarantee the uniform iISS of system \eqref{sys} over $\ell_{\textrm{all}}$, the class of arbitrary impulse sequences.}
\begin{theorem}
\label{th4}
Assume that there exist $V_1\in \nu_0$, $V_2\in \nu^*_0$, functions $\alpha_1,\alpha_2,\alpha_3,\chi\in\mathcal{K}_{\infty}$ and constants $\mu\geq 0$, $1\geq \rho_1\geq 0$ and $\rho_2\geq 0$, such that, for all $t\in \mathbb{R}^+$, $x\in\mathbb{R}^n$, $y\in\mathbb{R}^m$ and $\phi\in \mathcal{PC}$, conditions (i), (ii) and (iii) of Theorem \ref{th1} are satisfied. Moreover, if one of the following conditions holds: 
\begin{itemize}
\item[(a)] $\mu>0$ and $\rho_1+\rho_2<1$ (or $\rho_1= 1$ with $\rho_2=0$);
\item[(b)] $\mu=0$ and there exist a constant $\kappa>0$ such that condition (iv) of Theorem \ref{th3} holds and $\rho_1+\rho_2+(1-\rho_1)\kappa <1$,
\end{itemize}
then system (\ref{sys}) is uniformly iISS over $\ell_{\textrm{all}}$.
\end{theorem}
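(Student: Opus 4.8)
The plan is to recycle the Lyapunov--Krasovskii inductions of Theorems \ref{th1} and \ref{th3}, read in the limiting regimes where the dwell-time parameter $\delta$ drops out. Concretely, condition (a) is the ``no dwell-time'' version of Theorem \ref{th1} (stable flow together with non-destabilizing jumps), and condition (b) is the $\mu=0$ version of Theorem \ref{th3} (neutral flow together with strictly contracting jumps). In either case I would set $v_1(t)=V_1(t,x(t))$, $v_2(t)=V_2(t,x_t)$, $v=v_1+v_2$, fix a small $\lambda>0$, and prove the weighted bound $v(t)e^{\lambda(t-t_0)}\le u(t)$, where $u$ is the nondecreasing majorant of \eqref{eqn-u}, respectively \eqref{eqn-u2}; the iISS estimate then follows by the same standard argument invoked at the end of those proofs, and global existence follows from the resulting boundedness.

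For condition (a) I would choose $\lambda\in(0,\mu)$ small enough that $\rho_1+\rho_2 e^{\lambda r}\le 1$, which is possible since $\rho_1+\rho_2<1$ (and which holds with equality in the degenerate subcase $\rho_1=1$, $\rho_2=0$). The interval estimate \eqref{eqn-v}, carrying the genuine flow decay $e^{-\mu(t-t_k)}$, is unchanged. Because no spacing is now available, the spacing-dependent claim \eqref{eqn-claim} is replaced by the immediate bound $v(t_m^-+s)e^{\lambda(t_m+s-t_0)}\le u(t_m^-)$ for $s\in[-r,0]$, which is just monotonicity of $u$. In the subcase $\rho_1=1$, $\rho_2=0$ the functional part $v_2(t_m^-)$ is absorbed through $\rho_1\ge 1$ exactly as in \eqref{eqn-tm}, the target \eqref{eqn-u} holds with $\rho=1$, and the open-interval step \eqref{eqn-tm_tm+1} closes the induction.

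For condition (b) I would set $\mu=0$ everywhere in the proof of Theorem \ref{th3}. Condition (v) of that theorem then reads exactly $\rho_1+\rho_2+(1-\rho_1)\kappa<1$, our hypothesis, so there is $\lambda>0$ with $\rho_1+[\rho_2+(1-\rho_1)\kappa]e^{\lambda r}\le 1$. The functional part is handled by condition (iv) precisely as in the first two lines of \eqref{eqn-tm-2}, folding $v_2(t_m^-)$ into the delayed term and yielding the one-step estimate $v(t_m)e^{\lambda(t_m-t_0)}\le \rho\,u(t_m^-)+e^{\lambda(t_m-t_0)}\chi(\|w(t_m^-)\|)\le u(t_m)$ with $\rho\le 1$; the open-interval step mirrors \eqref{eqn-vt} with the flow factor $e^{\mu(t-t_m)}$ in \eqref{eqn-v2} collapsed to $1$.

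The main obstacle is case (b). With $\mu=0$ the flow contributes no decay, so the entire $\mathcal{KL}$ rate must be extracted from the strict jump contraction, and the growth factors $M=e^{(\mu+\lambda)\delta}$, $c=e^{\mu\delta}$ of \eqref{eqn-u2} degenerate; the delicate point is to run the induction without the gap bound $t_k-t_{k-1}\le\delta$ that was used in \eqref{eqn-k=1} and in the second inequality of \eqref{eqn-vt}, and to verify that the resulting decay is uniform over sequences whose gaps may be arbitrarily large. A secondary difficulty lives in case (a) when $\rho_1<1$: the absorption of $v_2(t_m^-)$ via $\rho_1\ge 1$ in \eqref{eqn-tm} is then unavailable and condition (iv) is not assumed, so one must instead exploit the flow margin $\mu>0$ to keep $v$ bounded across the jumps rather than relying on a contraction at each impulse.
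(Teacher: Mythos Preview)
Your plan coincides with the paper's proof. For case (a) the paper chooses $\lambda\in(0,\mu)$ with $\rho_1+\rho_2 e^{\lambda r}\le 1$ and reruns the induction of Theorem~\ref{th1} to establish \eqref{eqn-u} with $\rho=1$, replacing the jump estimate \eqref{eqn-tm} by
\[
v(t_m)e^{\lambda(t_m-t_0)}\le(\rho_1+\rho_2 e^{\lambda r})\,u(t_m^-)+e^{\lambda(t_m-t_0)}\chi(\|w(t_m^-)\|)\le u(t_m),
\]
after bounding $v(t_m^-+s)e^{\lambda(t_m+s-t_0)}\le u(t_m^-+s)\le u(t_m^-)$ by monotonicity of $u$, exactly as you propose. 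For case (b) the paper's entire argument is the single sentence ``the result can be derived by letting $\mu$ go to zero in Theorem~\ref{th3}.''

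Consequently, the two obstacles you flag are not handled in any more detail by the paper than by you. For (b) the paper gives no discussion of the uniformity in $\delta$ that worries you. For (a) with $\rho_1<1$, the paper passes directly from $v(t_m)$ to $\rho_1 v(t_m^-)+\rho_2\sup_{s} v(t_m^-+s)+\chi$, precisely the step whose justification you question; in Theorem~\ref{th1} that passage (the second inequality of \eqref{eqn-tm}) used $\rho_1\ge 1$ to absorb $v_2(t_m^-)$, and the paper neither supplies an alternative argument here nor invokes condition (iv). So your proposal reproduces the paper's route faithfully, and your caution about these two points is well placed rather than excessive.
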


\begin{proof}

Let us first suppose condition (a) holds, then there exists a small enough constant $\lambda>0$ such that $\mu>\lambda$ and $\rho_1+\rho_2 e^{\lambda r}\leq 1$. Similar to the proof of Theorem \ref{th1}, we use mathematical induction to show \eqref{eqn-u} is true {with $\rho=1$}. The main difference lies in the estimate of \eqref{eqn-tm}:
\begin{align}
 v(t_m) e^{\lambda(t_m-t_0)}
\leq & ~\rho_1 v(t^-_m) e^{\lambda(t_m-t_0)} + \rho_2 e^{\lambda r} \sup_{-r\leq s\leq 0}\{v(t^-_m+s) e^{\lambda(t_m+s-t_0)}\} + \chi(\|w(t^-_m)\|) e^{\lambda(t_m-t_0)}\cr
  =  & ~\rho_1 u(t^-_m)+\rho_2 e^{\lambda r} \sup_{-r\leq s\leq 0}\{u(t^-_m+s)\} + \chi(\|w(t^-_m)\|) e^{\lambda(t_m-t_0)}\cr
\leq & ~(\rho_1+\rho_2 e^{\lambda r}) u(t^-_m) + \chi(\|w(t^-_m)\|) e^{\lambda(t_m-t_0)}\cr
\leq & ~u(t_m).\nonumber
\end{align}
The rest of the proof is identical to that of Theorem \ref{th1} and thus omitted.

If condition (b) holds, the result can be derived by letting $\mu$ go to zero in Theorem \ref{th3}.
\end{proof}

{
\begin{remark}\label{remark3}
It can be derived from condition (iii) of the above mentioned results that
\begin{align}\label{cond3}
V_1(t,\phi(0)+I_k(t,\phi,y))\leq& \rho_1 V_1(t^-,\phi(0))+\rho_2 \sup_{s\in[-r,0]}\{V_1(t^-+s,\phi(s))\}+\chi(\|y\|)\cr
\leq & \bar{\rho}\sup_{s\in[-r,0]}\{V_1(t^-+s,\phi(s))\}+\chi(\|y\|)
\end{align}
with $\bar{\rho}=\rho_1+\rho_2$. Therefore, letting $\rho_1=0$ is equivalent to replacing the inequality in condition (iii), $\rho_1$ and $\rho_2$ with inequality \eqref{cond3}, $0$ and $\bar{\rho}$, respectively, in our obtained results. However, condition (b) of both Theorems \ref{th1} and \ref{th3} and condition (v) of Theorem \ref{th4} are more conservative with such replacement. If $\rho_2=0$ in condition (iii) of Theorem \ref{th1}, it can be seen that $\rho_1>1$ implies the impulses can destabilize the overall system while $\rho_1<1$ means the impulses are potentially stabilizing. However, the interpretations of parameter $\rho_2$ are quite different. If $\rho_2\geq 1$, then we can derive from condition (iii) of Theorem \ref{th1} that $v_1(t_k)$ can be larger than $v_1(t^-_k)$ since $\sup_{s\in[-r,0]}\{v_1(t^-_k+s)\}\geq v_1(t^-_k)$, and then the impulses can destroy the stability of the entire system. On the other hand, if $\rho_2<1$, it is still possible for $v_1(t_k)$ to be bigger than $v(t^-_k)$ according to condition (iii) of Theorem \ref{th1}. To be more specific, $\rho_2\sup_{s\in[-r,0]}\{v_1(t^-_k+s)\}$ can be larger than $v(t^-_k)$ due to the existence of time-delay in the impulses. Therefore, the impulses can be destabilizing even $\rho_2<1$ (see Example \ref{eg1} for a demonstration with numerical simulations). Based on the above discussion, we can see that condition (a) of Theorem \ref{th4} allows the continuous dynamics of system \eqref{sys} is iISS with the external input $w$ and exponentially stable without $w$, and the impulses either contribute to the iISS of system \eqref{sys} or are destabilizing but not destroy the iISS of the overall system. Condition (b) of Theorem \ref{th4} requires the continuous dynamics is marginally stable and the impulses mainly contribute to the iISS of the entire system. 
\end{remark}

\begin{remark}\label{remark4}
Although the iISS properties of system \eqref{sys} have been investigated in \cite{WHC-WXZ:2009,JL-XL-WCX:2011,XMS-WW:2012}, the obtained condition $V_1(t_k,x(t^-_k)+I_k(t_k,x_{t^-_k},w(t^-_k)))\leq \rho V_1(t^-_k,x(t^-_k))$ at each impulse time normally can not be verified for time-delay systems with delay-dependent impulses. Therefore, the results in \cite{WHC-WXZ:2009,JL-XL-WCX:2011,XMS-WW:2012} are only applicable to time-delay systems with delay-free impulses. Our conditions on $V_1$ have taken into account of the time-delay effects in each impulse, so that the iISS properties can be studied for systems with delay-dependent impulses. It is worthwhile to mention that the ISS properties have been studied in \cite{SD-MK-AM-LN:2012} for a type of time-delay systems with delayed impulses which requires an explicit relation between $x_t$ and $x_{t^-}$ at each impulse time so that the difference between $v(t_k)$ and $v(t^-_k)$ can be quantified. However, the results obtained in \cite{SD-MK-AM-LN:2012} are not applicable to system \eqref{sys}, and the corresponding analysis cannot be generalized to investigate the iISS properties of system \eqref{sys} mainly because such an explicit relation between $x_{t_k}$ and $x_{t^-_k}$ cannot be derived from the impulses of system \eqref{sys}.
\end{remark}
}

\section{iISS of a Class of Bilinear Systems}\label{Sec4}

In this section, we use the obtained results to investigate the iISS properties of the following bilinear system:
\begin{eqnarray}\label{sys1}
\left\{\begin{array}{ll}
\dot{x}(t)=A x(t) +\sum^q_{i=1}w_i(t)(A_i x(t) +B_i x(t-r)) + C w(t), & t\not=t_k,~k\in\mathbb{N}\cr
\Delta x(t)=D x(t^-) +E x(t-d)+ F w(t^-), & t=t_k,~k\in\mathbb{N}
\end{array}\right.
\end{eqnarray}
where $A,A_i,B_i$ $(i=1,...,q)$, $D,E$ are $n\times n$ matrices, $C,F$ are $n\times q$ matrices, and $r,d$ are the delays in the continuous dynamics and the impulses, respectively. The external input $w$ is in $\mathbb{R}^q$ and its components are $w_i$ $(i=1,...,q)$, that is, $w=(w_1,w_2,...,w_q)^T$.

First, if $A$ is Hurwitz, we construct the following iISS criterion for system \eqref{sys1} from Theorem \ref{th1} and Theorem \ref{th4}.
\begin{theorem}\label{th4.1}
Suppose $A$ is Hurwitz and $P$ is the positive matrix such that $A^TP+PA=-I$. Let
$$a=\frac{\lambda_{max}((I+D)^TP(I+D))}{\lambda_{min}(P)},~b=\frac{\lambda_{max}(E^TPE)}{\lambda_{min}(P)} ~\textrm{and}~ \mu_0=\min \Big\{\frac{1}{\lambda_{max}(P)},\frac{1}{2r} \Big\}.$$
\begin{itemize}
\item[(i)] If $\sqrt{a}+\sqrt{b}>1$ (or $\sqrt{a}+\sqrt{b}=1$ with $b\not=0$) and
\begin{equation}\label{eqn4.1}
2\ln(\sqrt{a}+\sqrt{be^{\mu_0 r}})<\mu_0 \delta,
\end{equation}
then system \eqref{sys1} is uniformly iISS over $\ell_{\textrm{inf}}(\delta)$.

\item[(ii)] If $\sqrt{a}+\sqrt{b}<1$ (or $a=1$ with $b=0$), then system \eqref{sys1} is uniformly iISS over $\ell_{\textrm{all}}$.
\end{itemize}
\end{theorem}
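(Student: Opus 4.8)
The plan is to build a Lyapunov--Krasovskii candidate $V=V_1+V_2$ and verify the hypotheses of Theorem \ref{th1} for part (i) and of Theorem \ref{th4} for part (ii). For the function part I would take $V_1(t,x)=x^TPx$, which gives condition (i) with $\alpha_1(s)=\lambda_{min}(P)s^2$ and $\alpha_2(s)=\lambda_{max}(P)s^2$. For the functional part I would use an exponentially weighted Krasovskii term $V_2(t,x_t)=\beta\int^t_{t-r}e^{\mu_0(s-t)}\|x(s)\|^2\,\mathrm{d}s$, whose sole purpose is to dominate the delayed state $B_ix(t-r)$ in the continuous dynamics. It satisfies $0\le V_2\le \beta r\|\phi\|_r^2$, and since $\|x(s)\|^2\le V_1/\lambda_{min}(P)$ it also obeys $V_2\le \kappa\sup_{s\in[-r,0]}V_1(t+s,\phi(s))$ with $\kappa=\beta r/\lambda_{min}(P)$, which is exactly the extra bound demanded by case (b) of Theorem \ref{th1}.

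Next I would establish the dissipation inequality (ii). Differentiating $V_1$ along the continuous dynamics and using $A^TP+PA=-I$ produces the term $-\|x\|^2$. The diagonal bilinear terms $\sum_i w_i\,2x^TPA_ix$ and the input cross term $2x^TPCw$ are handled by Young/Cauchy--Schwarz to yield contributions of the form $\chi(\|w\|)V_1$ and $\chi(\|w\|)$ with $\chi$ a $\mathcal{K}_\infty$ function (e.g. $\chi(s)=\theta s^2+cs$), while the mixed delayed term $\sum_i w_i\,2x^TPB_ix(t-r)$ is split, via $2w_ix^TPB_ix(t-r)\le \theta w_i^2x^TPx+\tfrac1\theta x(t-r)^TB_i^TPB_ix(t-r)$, into $\theta\|w\|^2V_1$ (absorbed into $\chi(\|w\|)V_1$) plus a non-input-weighted multiple of $\|x(t-r)\|^2$ that is cancelled by the $-\beta e^{-\mu_0 r}\|x(t-r)\|^2$ coming from $\mathrm{D}^+V_2$, which fixes $\beta=\tfrac{e^{\mu_0 r}}{\theta}\sum_i\lambda_{max}(B_i^TPB_i)$. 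The remaining $-\|x\|^2$ dominates $-\mu_0 V_1$ since $\mu_0\le 1/\lambda_{max}(P)$, and $\mathrm{D}^+V_2$ supplies the matching $-\mu_0 V_2$; together these give condition (ii) with $\mu=\mu_0$.

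The heart of the argument is the impulse estimate (iii). Writing $x(t)=(I+D)x(t^-)+Ex(t-d)+Fw(t^-)$ and using the $P$-weighted triangle inequality gives $\sqrt{V_1(t,x(t))}\le \sqrt a\,\sqrt{V_1(t^-)}+\sqrt b\,\sqrt{V_1(t-d)}+\sqrt{\lambda_{max}(F^TPF)}\,\|w(t^-)\|$, where $a$ and $b$ enter through $\lambda_{max}((I+D)^TP(I+D))$ and $\lambda_{max}(E^TPE)$ normalised by $\lambda_{min}(P)$, and $x(t-d)=x(t^-+s)$ with $-d\in[-r,0]$ since $r$ is the maximal delay. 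Squaring, splitting off the input with a factor $(1+\epsilon)$, and distributing the cross term $2\sqrt{ab}\,\sqrt{V_1(t^-)}\sqrt{\sup V_1}$ between the non-delayed and delayed parts by a weighted Young inequality with parameter $\eta$, I would choose $\eta=e^{-\mu_0 r/2}$ to minimise $\rho:=\rho_1+\rho_2 e^{\mu_0 r}$, obtaining precisely $\rho=(1+\epsilon)(\sqrt a+\sqrt{be^{\mu_0 r}})^2$, while $(1+1/\epsilon)\lambda_{max}(F^TPF)\|w\|^2$ is merged into a common $\mathcal{K}_\infty$ function $\chi$ for (ii) and (iii).

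Finally I would invoke the general results. For part (i), $\sqrt a+\sqrt b>1$ (or $=1$ with $b\neq0$) makes $\rho>1$, and \eqref{eqn4.1} gives $\ln\rho<\mu_0\delta$ for $\epsilon$ small; Theorem \ref{th1} then yields uniform iISS over $\ell_{\textrm{inf}}(\delta)$, using case (a) when the split can be arranged with $\rho_1\ge1$ and otherwise case (b) with the $\kappa$ above. For part (ii), repeating the estimate with $\eta=1$ gives $\rho_1+\rho_2=(1+\epsilon)(\sqrt a+\sqrt b)^2$, which is $<1$ for small $\epsilon$ precisely when $\sqrt a+\sqrt b<1$ (the boundary case $a=1,b=0$ giving $\rho_1=1,\rho_2=0$); since $\mu_0>0$, Theorem \ref{th4}(a) yields uniform iISS over $\ell_{\textrm{all}}$. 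I expect the main obstacle to be condition (ii): preserving the decay rate at exactly $\mu_0$ even though the Krasovskii functional needed to cancel the delayed bilinear term unavoidably reintroduces $+\beta\|x\|^2$, so that the slack in $-\|x\|^2\le-\mu_0 V_1$ (which disappears when $\mu_0=1/\lambda_{max}(P)$) must be controlled by taking $\theta$ large to shrink $\beta$ at the cost of enlarging $\chi$.
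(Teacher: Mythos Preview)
Your approach is essentially the paper's: the same $V_1=x^TPx$, a Krasovskii term $V_2$ to absorb the delayed bilinear piece, the same Young--inequality splitting of the jump to produce $\rho_1,\rho_2$ with optimal combination $(\sqrt a+\sqrt{be^{\mu r}})^2$, and then an appeal to Theorems~\ref{th1} and~\ref{th4}. The paper's $V_2$ uses the linear kernel $\varepsilon\int_{t-r}^t\bigl(2+\tfrac{s-t}{r}\bigr)\|x(s)\|^2\,\mathrm ds$ rather than your exponential kernel; both deliver a $-cV_2$ term together with the needed $-\|x(t-r)\|^2$ cancellation, so this is only a cosmetic difference.

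The one point where your proposal has a genuine gap is the claim that condition~(ii) holds with $\mu=\mu_0$ exactly. You correctly flag the obstacle---the unavoidable $+\beta\|x\|^2$ from $\mathrm D^+V_2$ eats into $-\|x\|^2$, and when $\mu_0=1/\lambda_{\max}(P)$ there is no slack left in $-\|x\|^2\le -\mu_0V_1$---but your proposed remedy (shrink $\beta$ by enlarging $\theta$) cannot restore $\mu=\mu_0$ in that case: any $\beta>0$ forces the effective rate strictly below $1/\lambda_{\max}(P)$. The paper's resolution is simply not to insist on $\mu_0$. It introduces a single small parameter $\varepsilon>0$, takes $\mu=\min\{(1-3\varepsilon)/\lambda_{\max}(P),\,1/(2r)\}$ and $\kappa=\tfrac{3r\varepsilon}{2\lambda_{\min}(P)}$, and then lets $\varepsilon\downarrow 0$: since $\mu\to\mu_0$ and $\kappa\to 0$, the \emph{strict} inequality~\eqref{eqn4.1} guarantees that $\ln\bigl(\rho_1+[\rho_2+(1-\rho_1)\kappa]e^{\mu r}\bigr)<\mu\delta$ (resp.\ $\ln(\rho_1+\rho_2 e^{\mu r})<\mu\delta$) still holds for $\varepsilon$ small enough. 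Once you make this adjustment your argument goes through; in particular your handling of $2x^TPCw$ purely as $\chi(\|w\|)V_1+\chi(\|w\|)$, without sacrificing an extra $\varepsilon\|x\|^2$, is legitimate and in fact slightly cleaner than what the paper does.
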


\begin{proof}
We first prove (i). By \eqref{eqn4.1}, we can find a small enough $\xi>0$ so that 
$$2\ln(\sqrt{a}+\sqrt{(1+\xi)be^{\mu_0 r}})<\mu_0 \delta,$$
and then there exists a positive $\varepsilon$ close to zero such that $\varepsilon<1/3$ and
\begin{equation}\label{eqn4.1.1}
\ln([\sqrt{a}+\sqrt{(1+\xi)be^{\mu r}}]^2 + \kappa e^{\mu r})<\mu \delta,
\end{equation}
where
$$\mu=\min \Big\{\frac{1-3\varepsilon}{\lambda_{max}(P)},\frac{1}{2r} \Big\} ~\textrm{and}~ \kappa=\frac{3r\varepsilon}{2\lambda_{min}(P)}.$$
Let
$$\epsilon=\sqrt{\frac{(1+\xi) b e^{\mu r}}{a}},$$
then denote
$$\rho_1=(1+\epsilon)a,~~\rho_2=(1+\epsilon^{-1})(1+\xi)b.$$
We conclude from \eqref{eqn4.1.1} that
\begin{itemize}
\item if $\rho_1<1$, then
\begin{equation}\label{eqn4.1.2}
\ln(\rho_1+[\rho_2 +(1-\rho_1)\kappa] e^{\mu r}) <\mu \delta;
\end{equation}
\item if $\rho_1\geq 1$, then
\begin{equation}\label{eqn4.1.3}
\ln(\rho_1+\rho_2 e^{\mu r}) <\mu \delta.
\end{equation}
\end{itemize}

Consider the Lyapunov-Krasovskii functional $V(t)=V_1(t)+V_2(t)$ with 
$$V_1(t)=x^TPx,~~V_2(t)=\varepsilon\int^t_{t-r}\big(2+\frac{s-t}{r}\big)x^T(s)x(s)\mathrm{d}s,$$
then we can see that condition (i) of Theorem \ref{th1} is satisfied with $\alpha_1(\|x\|)=\lambda_{min}(P)\|x\|^2$, $\alpha_2(\|x\|)=\lambda_{max}(P)\|x\|^2$ and $\alpha_3(\|\phi\|_r)=2\varepsilon r\|\phi\|^2_r$.

Let
$$p_1=\max_{i=1,...,q}\{\|PA_i\|\}~\textrm{and}~p_2=\max_{i=1,...,q}\{\|PB_i\|\}.$$
From the continuous dynamics of system \eqref{sys1}, we obtain
\begin{align}
\dot{V_1}(t) &= 2x^TPAx + \sum^q_{i=1}w_i(2x^TPA_ix +2x^TPB_ix(t-r)) +2x^T PCw\cr
             &\leq x^T(A^TP+PA)x +2qp_1\|w\|\|x\|^2 + 2qp_2\|w\|\|x\|\|x(t-r)\| +2x^T PCw\cr 
             &\leq x^T(A^TP+PA+\varepsilon I)x +(2qp_1\|w\|+\varepsilon^{-1} q^2p^2_2\|w\|^2)\|x\|^2 + \varepsilon\|x(t-r)\|^2 \cr
             &~~~~+\varepsilon^{-1}\|PC\|^2\|w\|^2\cr 
\dot{V_2}(t) &= 2\varepsilon \|x\|^2 -\varepsilon\|x(t-r)\|^2-\frac{\varepsilon}{r}\int^t_{t-r}x^T(s)x(s)\mathrm{d}s, \nonumber           
\end{align}
then
\begin{align}\label{eqn4.1.4}
\dot{V}(t) &\leq x^T(A^TP+PA+3\varepsilon I)x +(2qp_1\|w\|+\varepsilon^{-1} q^2p^2_2\|w\|^2)\|x\|^2 -\frac{\varepsilon}{r}\int^t_{t-r}x^T(s)x(s)\mathrm{d}s \cr
             &~~~~+\varepsilon^{-1}\|PC\|^2\|w\|^2\cr 
           &\leq \Big(\chi_1(\|w\|)-\frac{1-3\varepsilon}{\lambda_{max}(P)} \Big)V_1(t)-\frac{1}{2r}V_2(t) +\chi_2(\|w\|)  \cr
           &\leq (\chi_1(\|w\|)-\mu )V(t) +\chi_2(\|w\|),
\end{align}
where
$$\chi_1(\|w\|)=\frac{2qp_1\|w\|+\varepsilon^{-1} q^2p^2_2\|w\|^2}{\lambda_{min}(P)}~\textrm{and}~\chi_2(\|w\|)=\varepsilon^{-1}\|PC\|^2\|w\|^2.$$

From the impulse effects of system \eqref{sys1}, we have
\begin{align}\label{eqn4.1.5}
V_1(t_k)&\leq (1+\epsilon)x^T(t^-_k)(I+D)^TP(I+D)x(t^-_k) \cr
        &~~~~+(1+\epsilon^{-1})[(1+\xi)x^T(t_k-d)E^TPEx(t_k-d)+(1+\xi^{-1})w^T(t^-_k)F^TPFw(t^-_k)]\cr
        &\leq \rho_1 V_1(t^-_k) +\rho_2 V_1(t_k-d) +\chi_3(\|w(t^-_k)\|)
\end{align}
where
$$\chi_3(\|w(t^-_k)\|)=(1+\epsilon^{-1})(1+\xi^{-1}){\lambda_{max}(F^TPF)}\|w(t^-_k)\|^2.$$
Furthermore, we have
\begin{align}\label{eqn4.1.6}
V_2(t)&\leq \varepsilon \sup_{s\in[-r,0]}\{x^T(t+s)x(t+s)\}\int^t_{t-r}2+\frac{s-t}{r}\mathrm{d}s\cr
      &\leq \frac{3}{2}r\varepsilon \sup_{s\in[-r,0]}\{x^T(t+s)x(t+s)\}\cr
      &\leq \kappa \sup_{s\in[-r,0]}\{V_1(t+s)\}.
\end{align}

We conclude from \eqref{eqn4.1.4},\eqref{eqn4.1.5},\eqref{eqn4.1.3} that conditions (ii),(iii),(a) of Theorem \ref{th1} hold with $\chi=\max\{\chi_1,\chi_2,\chi_3\}$ {and the maximum time-delay $\tau=\max\{r,d\}$}, while \eqref{eqn4.1.2} implies condition (b) of Theorem \ref{th1} is true. Therefore, system \eqref{sys1} is uniformly iISS over $\ell_{\textrm{inf}}(\delta)$.

To prove (ii), we can conclude directly from Theorem \ref{th4}.
\end{proof}

Next, we derive the following result from Theorem \ref{th3} when $A$ is not a Hurwitz matrix.

\begin{theorem}\label{th4.2}
If $A$ is not Hurwitz and
\begin{equation}\label{eqn4.2}
2\ln(\|I+D\|+\|E\|) < -\lambda_{max}(A+A^T) \delta,
\end{equation}
then system \eqref{sys1} is uniformly iISS over $\ell_{\textrm{sup}}(\delta)$.
\end{theorem}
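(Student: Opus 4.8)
The plan is to deduce this from Theorem \ref{th3}, reusing the Lyapunov--Krasovskii functional from the proof of Theorem \ref{th4.1}. Since $A$ is not Hurwitz, the Lyapunov equation $A^TP+PA=-I$ is unavailable, so I would simply take $P=I$, i.e. $V(t)=V_1(t)+V_2(t)$ with
$$V_1(t)=x^Tx,\qquad V_2(t)=\varepsilon\int_{t-r}^{t}\Big(2+\frac{s-t}{r}\Big)x^T(s)x(s)\,\mathrm{d}s.$$
Condition (i) of Theorem \ref{th3} then holds with $\alpha_1(\|x\|)=\alpha_2(\|x\|)=\|x\|^2$ and $\alpha_3(\|\phi\|_r)=2\varepsilon r\|\phi\|_r^2$, and condition (iv) holds with $\kappa=\tfrac{3}{2}r\varepsilon$, exactly as in \eqref{eqn4.1.6}. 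Because \eqref{eqn4.2} is a strict inequality, I would first fix small auxiliary constants $\eta>0$ and $\varepsilon>0$ and set $\mu=\lambda_{max}(A+A^T)+3\varepsilon$. Note $\mu>0$ (this also covers the marginal case $\lambda_{max}(A+A^T)=0$, where $A$ is not Hurwitz yet $A+A^T\preceq 0$). By continuity, since the left- and right-hand sides of the target bound $\ln\big((1+\eta)(\|I+D\|+\|E\|)^2+(1-\rho_1)\kappa\big)<-\mu\delta$ converge respectively to $2\ln(\|I+D\|+\|E\|)$ and $-\lambda_{max}(A+A^T)\delta$ as $\eta,\varepsilon\to0^+$, I can choose $\eta,\varepsilon$ small enough that this bound holds.

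For the continuous dynamics I would mimic the derivation of \eqref{eqn4.1.4}: differentiating $V_1$ along \eqref{sys1}, the only delayed term $2\sum_i w_i x^T B_i x(t-r)$ is split by Young's inequality into a piece $\varepsilon\|x(t-r)\|^2$ and a $\|w\|$-dependent piece, and the former is cancelled by the identity $\dot V_2=2\varepsilon\|x\|^2-\varepsilon\|x(t-r)\|^2-\frac{\varepsilon}{r}\int_{t-r}^{t}x^Tx\,\mathrm{d}s$. Using $x^T(A+A^T)x\le\lambda_{max}(A+A^T)\|x\|^2$ together with $-\frac{\varepsilon}{r}\int_{t-r}^{t}x^Tx\,\mathrm{d}s\le-\frac{1}{2r}V_2\le0$, this produces $\mathrm{D}^+V\le(\chi_1(\|w\|)+\mu)V_1-\frac{1}{2r}V_2+\chi_2(\|w\|)\le(\chi_1(\|w\|)+\mu)V+\chi_2(\|w\|)$, which is condition (ii), with $\chi_1,\chi_2$ the same as in the proof of Theorem \ref{th4.1} specialized to $P=I$.

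The main work is the impulse estimate, condition (iii). Writing $x(t_k)=(I+D)x(t^-_k)+Ex(t_k-d)+Fw(t^-_k)$ and applying $\|u+v\|^2\le(1+\eta)\|u\|^2+(1+\eta^{-1})\|v\|^2$ with $v=Fw(t^-_k)$, I would bound $\|(I+D)x(t^-_k)+Ex(t_k-d)\|$ by the triangle inequality and then distribute the resulting cross term via $2\|x(t^-_k)\|\,\|x(t_k-d)\|\le\|x(t^-_k)\|^2+\|x(t_k-d)\|^2$. Bounding $\|x(t_k-d)\|^2=V_1(t_k-d)\le\sup_{s\in[-\tau,0]}\{V_1(t^-_k+s)\}$ with $\tau=\max\{r,d\}$ the maximal delay, this yields condition (iii) with
$$\rho_1=(1+\eta)\|I+D\|\big(\|I+D\|+\|E\|\big),\qquad \rho_2=(1+\eta)\|E\|\big(\|I+D\|+\|E\|\big),$$
and $\chi_3(\|w\|)=(1+\eta^{-1})\|F\|^2\|w\|^2$, so that $\rho_1+\rho_2=(1+\eta)(\|I+D\|+\|E\|)^2$. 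With $\chi=\max\{\chi_1,\chi_2,\chi_3\}$, the bound arranged in the first paragraph gives $\rho_1+\rho_2+(1-\rho_1)\kappa<e^{-\mu\delta}\le1$, which forces $0\le\rho_1<1$ and is precisely condition (v). All hypotheses of Theorem \ref{th3} are then met, and uniform iISS over $\ell_{\textrm{sup}}(\delta)$ follows. I expect the delicate point to be exactly this impulse bookkeeping: distributing the cross term so that $\rho_1+\rho_2$ collapses to $(\|I+D\|+\|E\|)^2$ while keeping $\rho_1<1$, and checking that the strict inequality \eqref{eqn4.2} leaves enough room to absorb the $\eta$ and $\varepsilon$ perturbations.
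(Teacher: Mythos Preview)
Your proposal is correct and follows essentially the same route as the paper: verify the hypotheses of Theorem~\ref{th3} with $V_1=x^Tx$ and an integral-type $V_2$, estimate $\dot V$ via Young's inequality, and split the impulse term so that $\rho_1+\rho_2\to(\|I+D\|+\|E\|)^2$ as the auxiliary parameters vanish. The only cosmetic differences are that the paper uses the simpler unweighted $V_2(t)=\varepsilon\int_{-r}^{0} x^T(t+s)x(t+s)\,\mathrm{d}s$ (giving $\kappa=\varepsilon r$ and $\mu=\lambda_{max}(A+A^T)+2\varepsilon$) and handles the impulse cross term via two nested Young inequalities with an optimized weight $\epsilon=\sqrt{1+\xi}\,\|E\|/\|I+D\|$, which yields $\rho_1=(1+\epsilon)\|I+D\|^2$, $\rho_2=(1+\epsilon^{-1})(1+\xi)\|E\|^2$ rather than your triangle-inequality-plus-AM--GM split; both choices produce the same $\rho_1+\rho_2$ in the limit and the same final condition~\eqref{eqn4.2}.
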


\begin{proof}
It can be seen from \eqref{eqn4.2} that there exists a small enough $\xi>0$ so that 
$$2\ln(\|I+D\|+\sqrt{1+\xi}\|E\|) < -\lambda_{max}(A+A^T) \delta.$$
Let $\epsilon=\frac{\sqrt{1+\xi}\|E\|}{\|I+D\|}$, then we can find a positive $\varepsilon$ close to zero such that
\begin{equation}\label{eqn4.2.1}
\ln(\rho_1+\rho_2+(1-\rho_1)\kappa)<-\mu\delta,
\end{equation}
where 
\begin{equation}
\rho_1=(1+\epsilon)\|I+D\|^2,~\rho_2=(1+\epsilon^{-1})(1+\xi)\|E\|^2,~\kappa=\varepsilon r, ~\textrm{and}~\mu=\lambda_{max}(A+A^T)+2\varepsilon.\nonumber
\end{equation}

Consider the following Lyapunov candidate $V(t)=V_1(t)+V_2(t)$ with
\begin{equation}
V_1(t)=x^Tx,~~V_2(t)=\varepsilon \int^0_{-r} x^T(t+s)x(t+s)\mathrm{d}s,
\end{equation}
then we can see that condition (i) of Theorem \ref{th3} is satisfied with $\alpha_1(\|x\|)=\alpha_2(\|x\|)=\|x\|^2$ and $\alpha_3(\|\phi\|_r)=\varepsilon r\|\phi\|^2_r$.

Considering the continuous dynamics of systems \eqref{sys1}, it follows
\begin{align}
\dot{V_1}(t) &= x^T(A+A^T)x + \sum^q_{i=1}w_i(2x^TA_ix +2x^TB_ix(t-r)) +2x^T Cw\cr
             &\leq x^T(A+A^T)x +q\bar{p}_1\|w\|\|x\|^2 + 2q\bar{p}_2\|w\|\|x\|\|x(t-r)\| +\varepsilon x^Tx +\varepsilon^{-1}\|C\|^2\|w\|^2\cr 
             &\leq x^T(A+A^T+\varepsilon I)x +(q\bar{p}_1\|w\|+\varepsilon^{-1} q^2\bar{p}^2_2\|w\|^2)\|x\|^2 + \varepsilon\|x(t-r)\|^2 +\varepsilon^{-1}\|C\|^2\|w\|^2\cr 
\dot{V_2}(t) &= \varepsilon \|x\|^2 -\varepsilon\|x(t-r)\|^2, \nonumber           
\end{align}
where
$$\bar{p}_1=\max_{i=1,..,q}\{\lambda_{max}(A_i+A^T_i)\} ~\textrm{and}~\bar{p}_2=\max_{i=1,..,q}\{\|B_i\|\},$$
then
\begin{align}\label{eqn4.2.2}
\dot{V}(t) &\leq x^T(A+A^T+2\varepsilon I)x +(q\bar{p}_1\|w\|+\varepsilon^{-1} q^2\bar{p}^2_2\|w\|^2)\|x\|^2 +\varepsilon^{-1}\|C\|^2\|w\|^2\cr
           &\leq (\chi_1(\|w\|)+\mu) V(t) +\chi_2(\|w\|)
\end{align}
where
$$\chi_1(\|w\|)=q\bar{p}_1\|w\|+\varepsilon^{-1} q^2\bar{p}^2_2\|w\|^2 ~\textrm{and}~ \chi_2(\|w\|)=\varepsilon^{-1}\|C\|^2\|w\|^2.$$

For $t=t_k$, we obtain from the impulses of system \eqref{sys1} that
\begin{align}\label{eqn4.2.3}
V_1(t_k) &= (D x(t^-_k) +E x(t_k-d)+ F w(t_k^-))^T(D x(t^-_k) +E x(t_k-d)+ F w(t_k^-))\cr
         &\leq \rho_1 V_1(t^-_k) +\rho_2 V_1(t_k-d) +\chi_3(\|w(t^-_k)\|),
\end{align}
where
$$\chi_3(\|w(t^-_k)\|)=(1+\epsilon^{-1})(1+\xi^{-1})\|F\|^2\|w(t^-_k)\|^2.$$
Moreover, we can verify
\begin{equation}\label{eqn4.2.4}
V_2(t)\leq \kappa \sup_{s\in[-r,0]}\{V_1(t+s)\}.
\end{equation}
We conclude from \eqref{eqn4.2.2},\eqref{eqn4.2.3},\eqref{eqn4.2.4} and \eqref{eqn4.2.1} that conditions (ii),(iii),(iv) and (v) of Theorem \ref{th3} hold. Therefore, system \eqref{sys1} is uniformly iISS over $\ell_{\textrm{sup}}(\delta)$.
\end{proof}

\begin{remark}
The iISS properties of system \eqref{sys1} without the impulse effects were initially studied in \cite{PP-ZPJ:2006}. Then the iISS results were extended to system \eqref{sys1} with switchings in \cite{JL-XL-WCX:2011}. But no time-delay was considered in the impulses. Theorem 3.10 in \cite{PP-ZPJ:2006} can be obtained from Theorem \ref{th4.1} when matrices $D,E$ and $F$ are zeros, and Theorem \ref{th4.1} and Theorem \ref{th4.2} reduce to Proposition 4.1 in \cite{JL-XL-WCX:2011} for system \eqref{sys1} without switchings when $E=0$. Our results are more general in the sense that we have generalized the iISS results in \cite{JL-XL-WCX:2011,PP-ZPJ:2006} to impulsive system \eqref{sys1} in which time-delay effects are considered in the impulses.
\end{remark}

\section{Illustrative Examples}\label{Sec5}
In this section, two numerical examples of system \eqref{sys1} are presented to illustrate the previous iISS results. 
\begin{example}\label{eg1}
To demonstrate Theorem \ref{th4.1}, we consider scalar bilinear system \eqref{sys1} with $x\in\mathbb{R}$, $w\in \mathbb{R}^2$ with $w_1(t)=t^{-2}$ and $w_2(t)=e^{-2t}$, $A=-1/2$, $A_1=1/2$, $A_2=1/4$, $B_1=1/3$, $B_2=1/5$, $C=[1/2~1/2]$, $F=[1/3~1/3]$, and $r=d=2/5$. 
\end{example}

{Next, we consider three types of impulses in system \eqref{sys1}.}
\begin{itemize}
\item $D=1/4$ and $E=1/5$.\\
Then $P=1$ so that $AP+PA=-1$ and
$$a=\frac{(1+D)P(1+D)}{P}=\frac{25}{16},~b=\frac{EPE}{P}=\frac{1}{25},~\mu_0=\min\Big\{ \frac{1}{P},\frac{1}{2r} \Big\}=1.$$
Hence, $\sqrt{a}+\sqrt{b}=1.45>1$ and Theorem \ref{th4.1} implies that system \eqref{sys1} is uniformly iISS over $\ell_{\textrm{inf}}(\delta)$ with $\delta$ satisfying
$$\delta>\frac{2\ln(\sqrt{a}+\sqrt{be^{\mu_0 r}})}{\mu_0}=0.8033.$$
As a numerical example, we take $t_{k+1}-t_k=\delta=1$ for all $k\in\mathbb{N}$. Simulation results for system \eqref{sys1} with the above given parameters are shown in Fig. \ref{fig1a} and \ref{fig1b}.
\begin{figure}[!t]
\centering
\subfigure[System response with $D=1/4$ and $E=1/5$.]{\label{fig1a}\includegraphics[trim={0 10cm 0 9cm},width=3.4in]{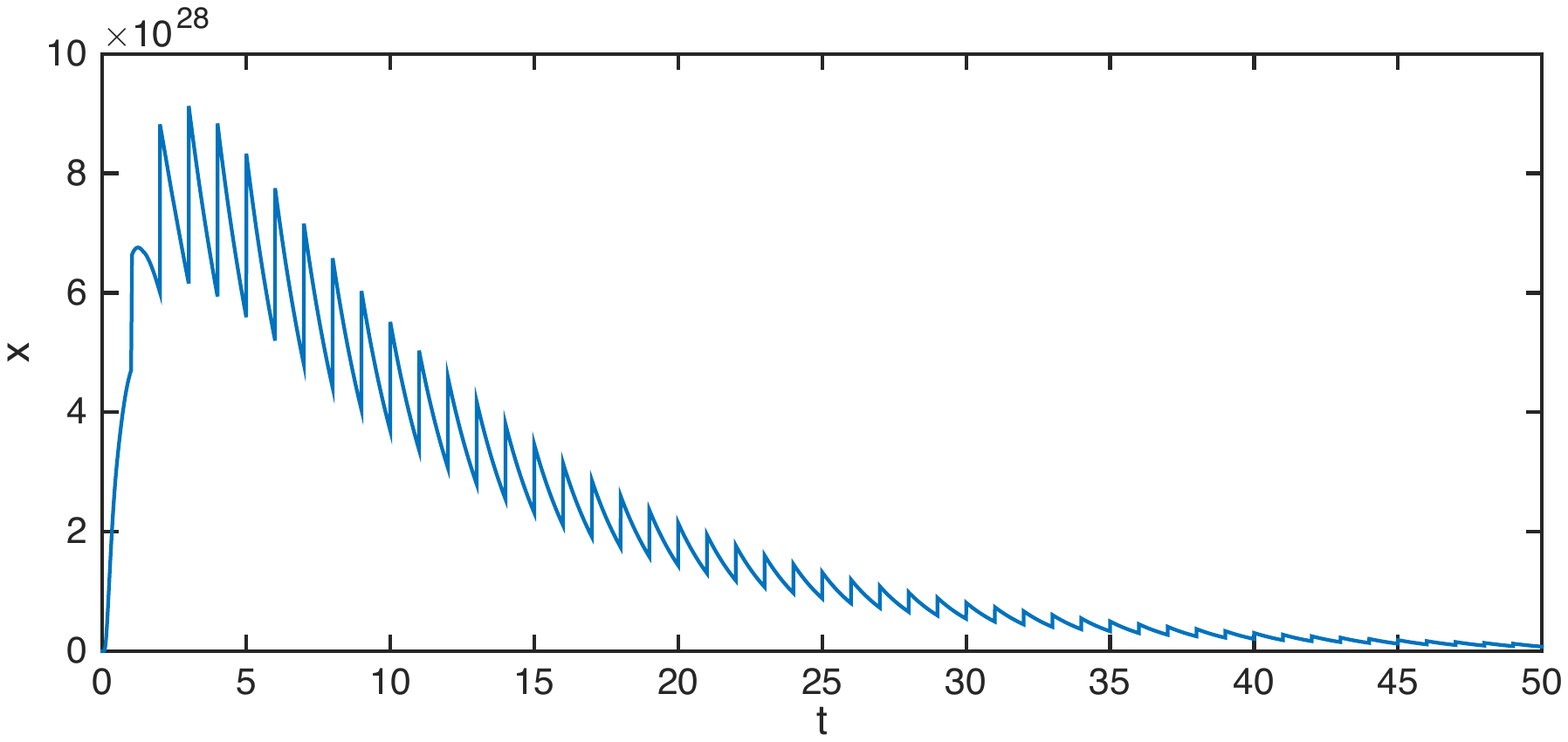}}
\subfigure[System response with $w=0$, $D=1/4$ and $E=1/5$.]{\label{fig1b}\includegraphics[trim={0 10cm 0 9cm},width=3.4in]{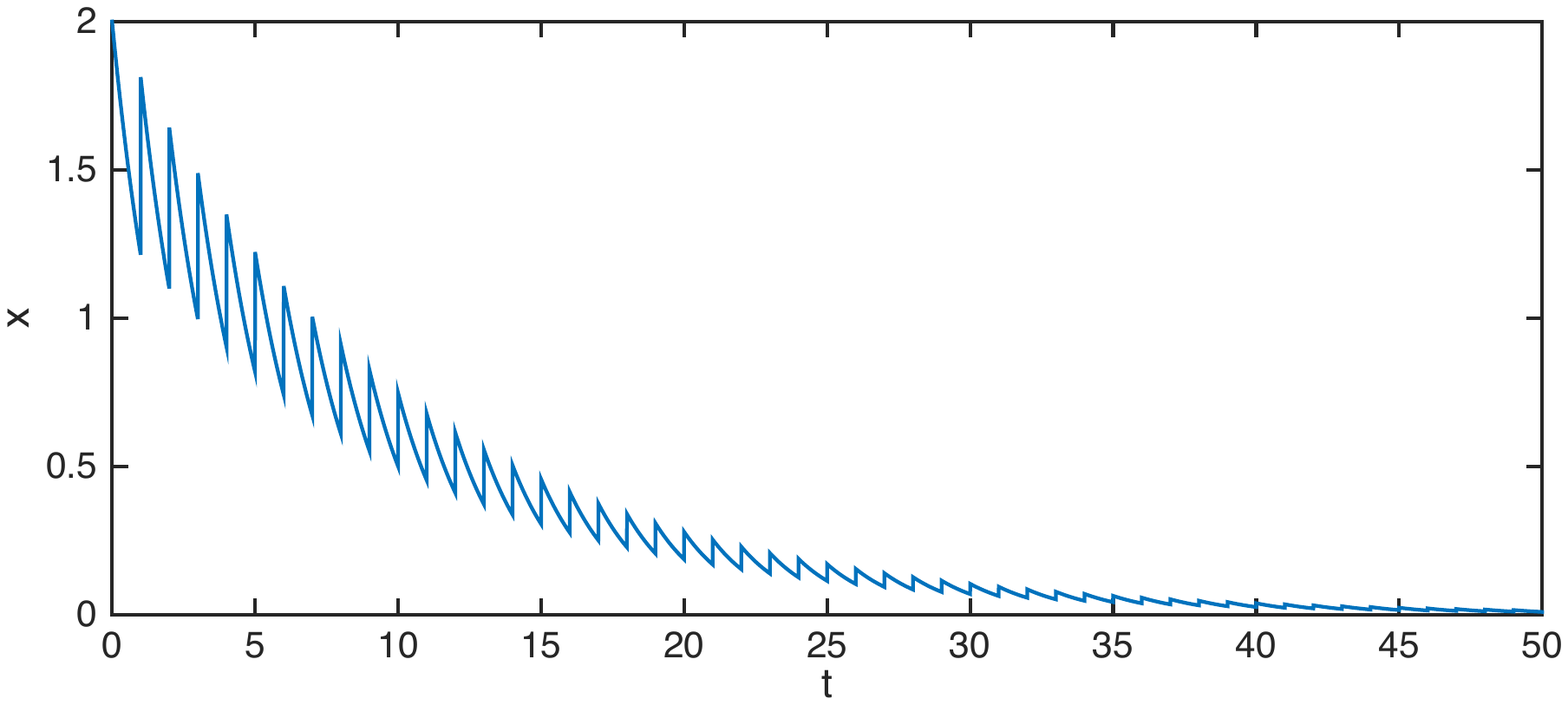}}
\subfigure[System response with $D=-1$ and $E=3/5$.]{\label{fig1'a}\includegraphics[trim={0 10cm 0 9cm},width=3.4in]{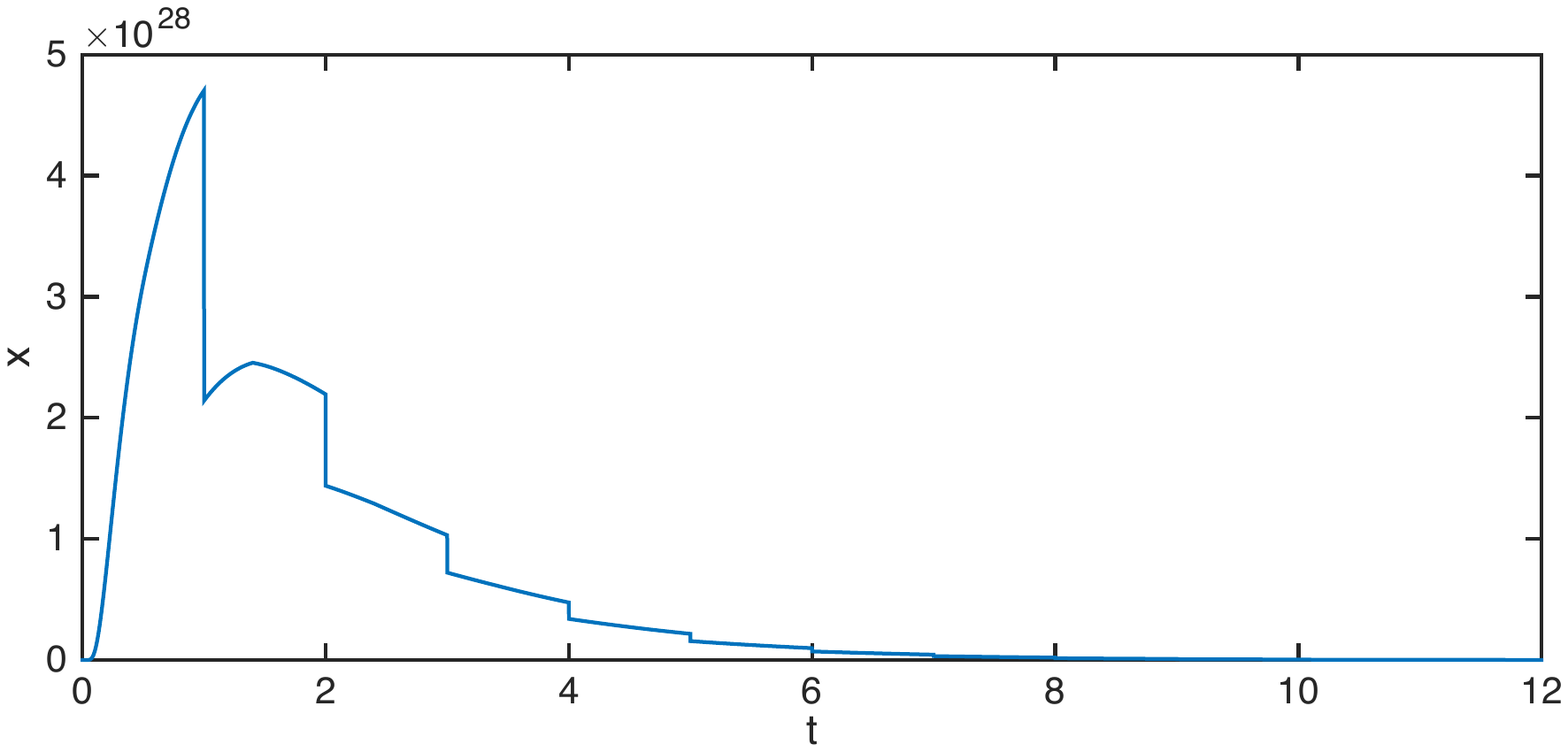}}
\subfigure[System response with $w=0$, $D=-1$ and $E=3/5$.]{\label{fig1'a1}\includegraphics[trim={0 10cm 0 9cm},width=3.4in]{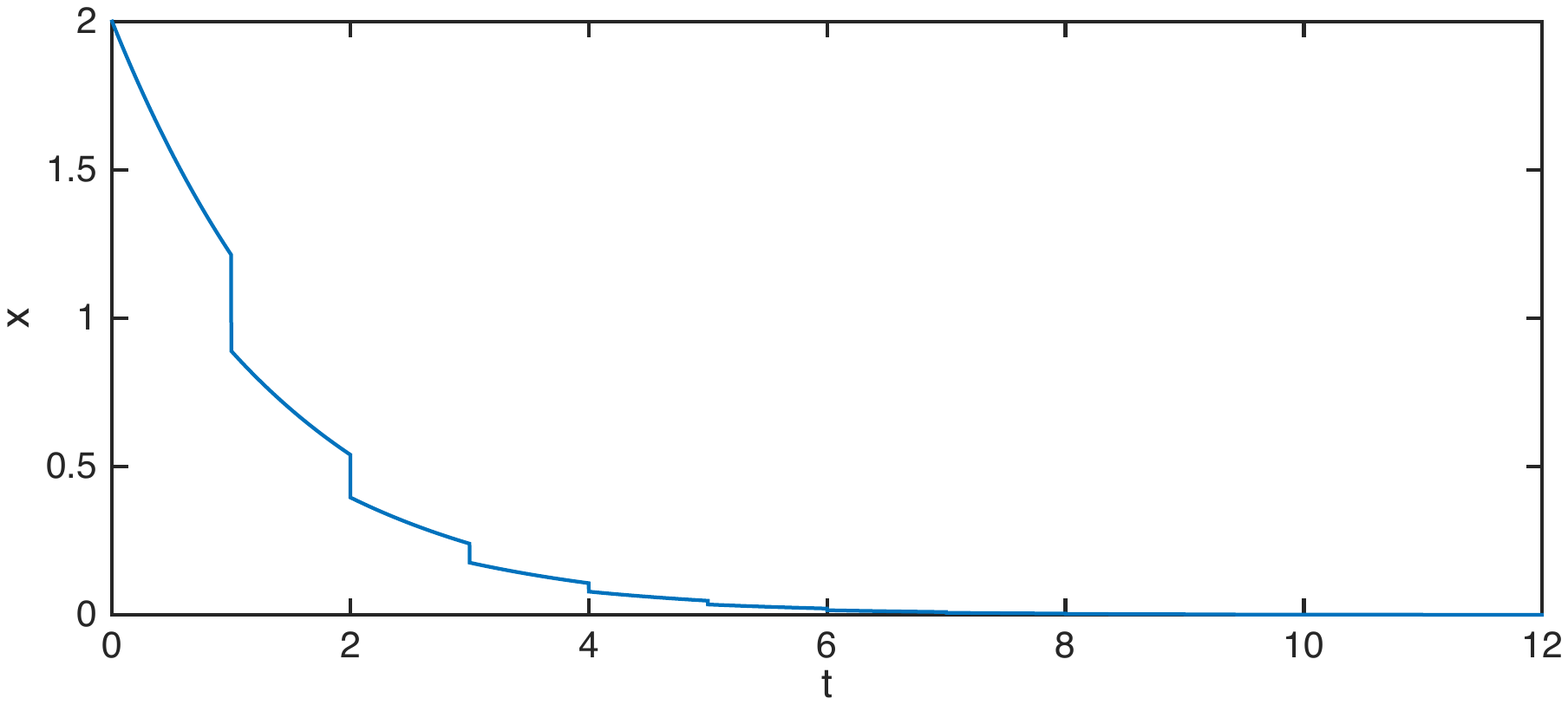}}
\subfigure[System response with $D=-1$ and $E=4/5$.]{\label{fig1'b}\includegraphics[trim={0 10cm 0 9cm},width=3.4in]{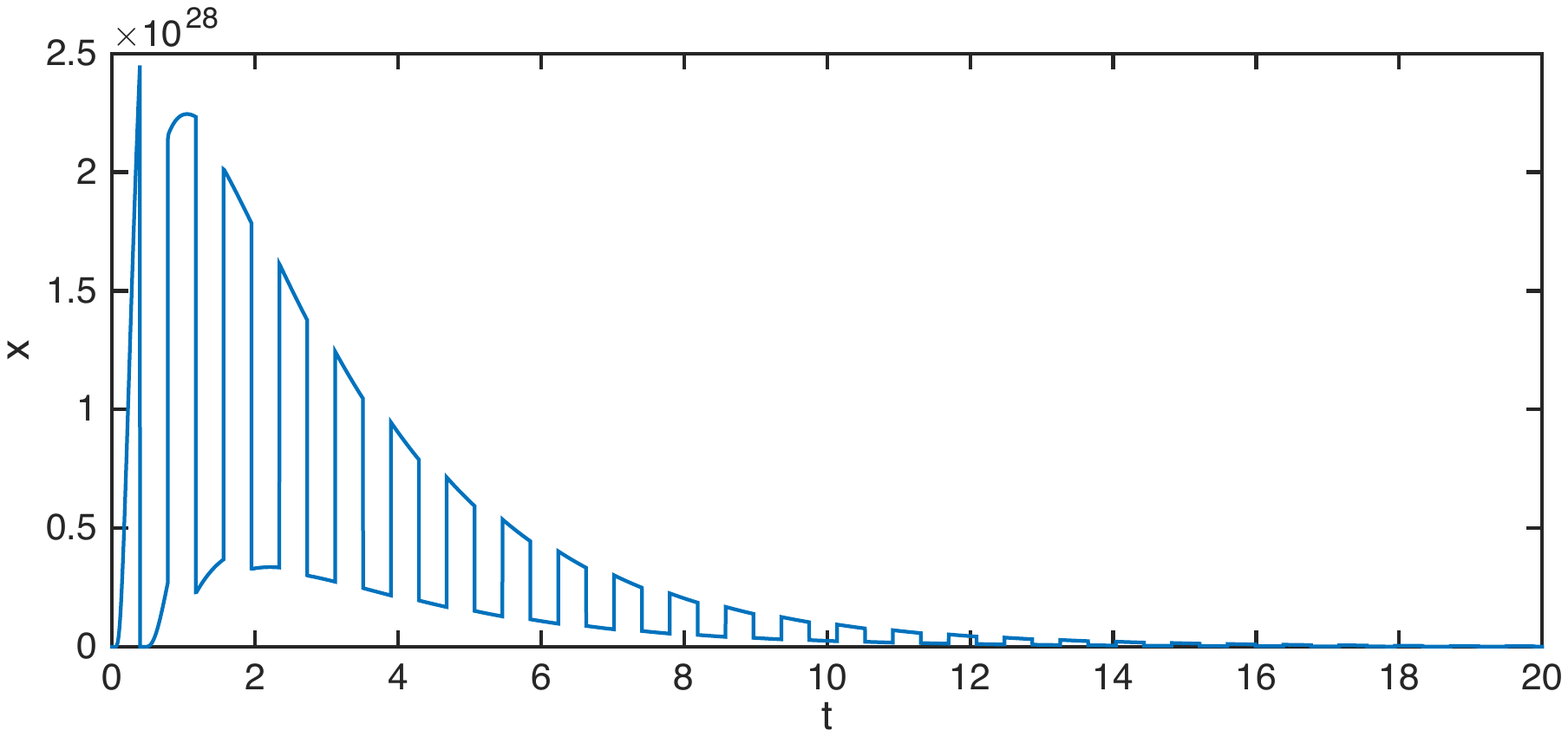}}
\subfigure[System response with $w=0$, $D=-1$ and $E=4/5$.]{\label{fig1'b1}\includegraphics[trim={0 10cm 0 9cm},width=3.4in]{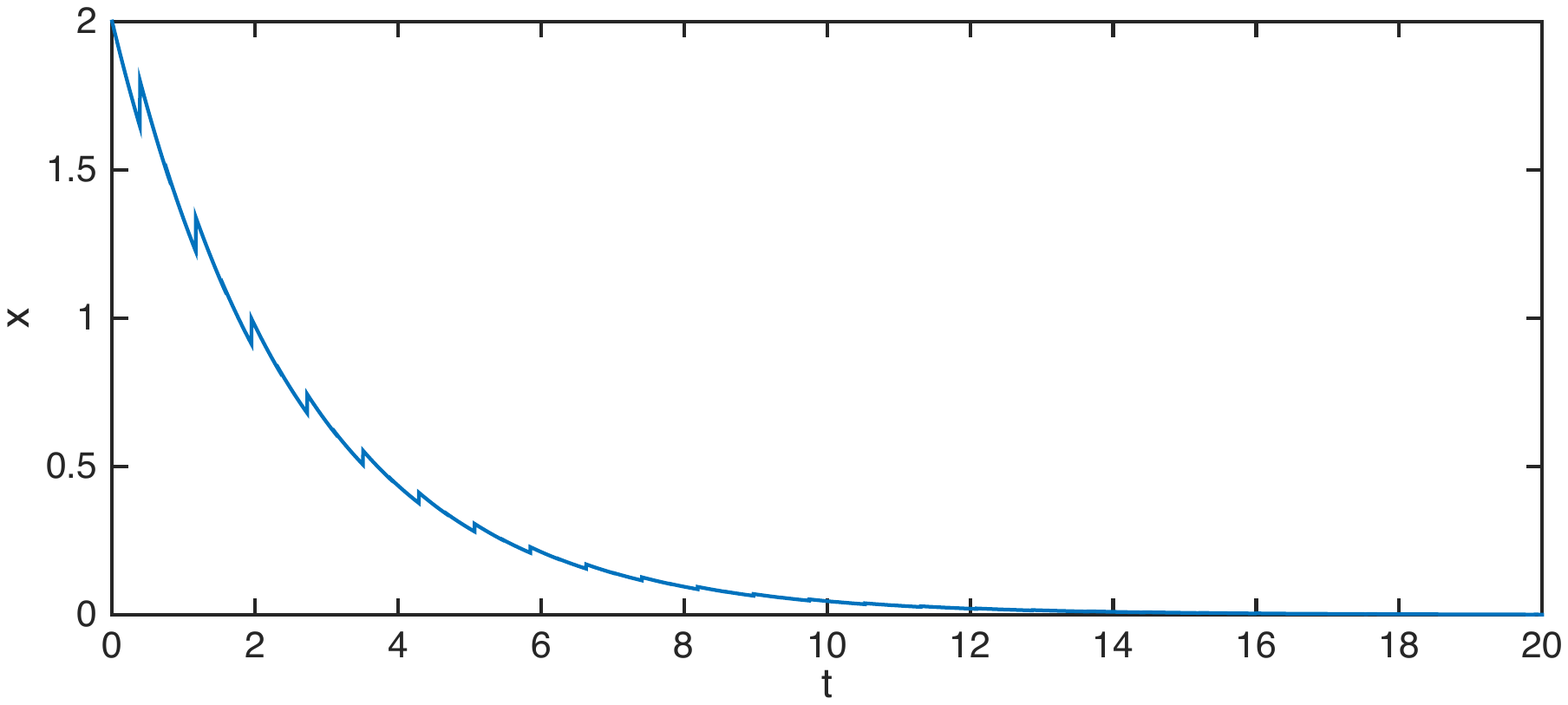}}
\caption{Simulation results for Example \ref{eg1} with initial condition $x_{t_0}(s)=2$ for $s\in[-r,0]$. System responses with the external inputs are shown in Fig. \ref{fig1a},\ref{fig1'a},\ref{fig1'b}. State trajectories of system \eqref{sys1} without the external inputs are given in Fig. \ref{fig1b},\ref{fig1'a1},\ref{fig1'b1}.}
\label{fig1}
\end{figure}
{
\item $D=-1$ and $E=3/5$.\\
With $P=1$, we have $a=(1+D)^2=0$, $b=E^2=9/25$ and $\mu_0=1$, which imply $\sqrt{a}+\sqrt{b}=3/5<1$. We then can conclude from Theorem \ref{th4.1} that system \eqref{sys1} is uniformly iISS over $\ell_{\textrm{all}}$. Simulation results with $t_{k+1}-t_k=\delta=1$ for all $k\in\mathbb{N}$ are given in Fig. \ref{fig1'a} and \ref{fig1'a1}. It can be seen that the delay-dependent impulses play a positive role in stabilizing the entire system.

\item $D=-1$ and $E=4/5$.\\
Similarly, let $P=1$ and then $a=0$, $b=16/25$ and $\mu_0=1$. Therefore, $\sqrt{a}+\sqrt{b}=4/5<1$ implies $\rho_1+\rho_2<1$ in Theorem \ref{th4}, and Theorem \ref{th4.1} tells that system \eqref{sys1} is uniformly iISS over $\ell_{\textrm{all}}$. See Fig. \ref{fig1'b} and \ref{fig1'b1} for numerical simulations with $t_{k+1}-t_k=\delta=0.39$ for all $k\in\mathbb{N}$. Fig. \ref{fig1'b} shows that some impulses are stabilizing while the others are destabilizing due to the existence of time-delay in the impulses. But all the impulses are destabilizing in Fig. \ref{fig1'b1}. This verified our discussion on the role of $\rho_2$ in Remark \ref{remark3}.
}
\end{itemize}

The following example is provided to show the effectiveness of Theorem \ref{th4.2}.
\begin{example}\label{eg2}
Consider bilinear system \eqref{sys1} with $x,w\in \mathbb{R}^2$, $A_1=0.5I$, $A_2=0.25I$, $B_1=I/3$, $B_2=0.2I$, $C=[0.5~0.5]$, $D=-0.65I$, $E=0.2I$, $F=[1/3~1/3]$, $r=d=0.4$, and
\[
A=\begin{bmatrix}
1.5 & 1 \\
0.5 & 2
\end{bmatrix}.
\]
\end{example}
It can be seen that $A$ is not Hurwitz. We conclude from Theorem \ref{th4.2} that system \eqref{sys1} with these parameters is uniformly iISS over $\ell_{\textrm{sup}}(\delta)$ with
$$\delta< \frac{2\ln(\|I+D\|+\|E\|)}{-\lambda_{max}(A+A^T)}=0.2011.$$
In the simulations, we take $t_{k+1}-t_k=\delta=0.2$ for all $k\in\mathbb{N}$ and the external input $w$ the same as that in Example \ref{eg1}. Simulation results for Example \ref{eg2} are shown in Fig. \ref{fig2}.
\begin{figure}[!t]
\centering
\subfigure[System response with the external input]{\label{fig2a}\includegraphics[width=3.4in]{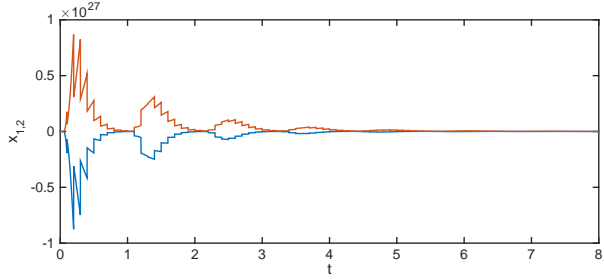}}
\subfigure[System response without the external input]{\label{fig2b}\includegraphics[width=3.4in]{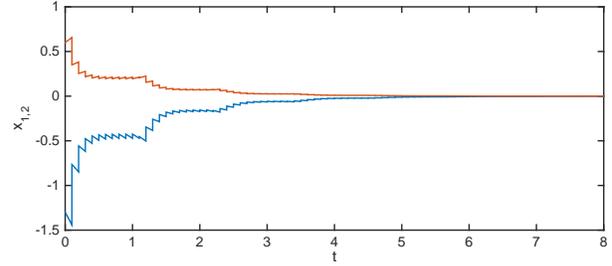}}
\caption{Simulation results for Example \ref{eg2} with $x_{t_0}(s)=[0.6~-1.4]^T$ for $s\in[-r,0]$.}
\label{fig2}
\end{figure}


\section{Conclusions}\label{Sec6}
The method of Lyapunov-Krasovskii functionals has been used to study the iISS properties of impulsive systems with time-delay. The main contribution of this paper is that iISS criteria have been obtained for general nonlinear impulsive systems with time-delay in both the continuous dynamics and the impulses. The iISS properties of a class of bilinear systems have been investigated in great details to demonstrate the effectiveness of our iISS results. Numerical simulations of two illustrative examples have been provided accordingly. An interesting topic of the current research is to study the event-triggered control applications of the obtained iISS results (see, e.g. \cite{HY-FH-XC:2019}).

\section*{Acknowledgments}
The author would like to thank the anonymous reviewers, whose constructive comments and suggestions have improved the quality of this paper. This research was partially supported by the Coleman Postdoctoral Fellowship from Queen's University at Kingston, which is gratefully acknowledged.

{\footnotesize

}

\end{document}